\newcommand{\re}{\textcolor{black}}
\newcolumntype{C}[1]{>{\centering\arraybackslash$}p{#1}<{$}}
\newcommand{\Rmnum}[1]{\expandafter\@slowromancap\romannumeral #1@}
\newtheorem{theorem}{Theorem}
\newtheorem{lemma}[theorem]{Lemma}
\newcommand{\norm}[1]{\lVert#1\rVert_2}
\newcommand{\normf}[1]{\lVert#1\rVert_F}
\def\endthebibliography{%
  \def\@noitemerr{\@latex@warning{Empty `thebibliography' environment}}%
  \endlist
}
\newcommand{\algmargin}{\the\ALG@thistlm}   
\algnewcommand{\parState}[1]{\State%
    \parbox[t]{\dimexpr\linewidth-\algmargin}{\strut #1\strut}}
\begin{document}
\title{Energy-Efficient UAV-Assisted Mobile Edge Computing: Resource Allocation and Trajectory Optimization}
\author{Mushu~Li,~\IEEEmembership{Student~Member,~IEEE,}
		Nan~Cheng\footnotemark{},~\IEEEmembership{Member,~IEEE,}
        Jie~Gao,~\IEEEmembership{Member,~IEEE,}
        Yinlu~Wang,
        Lian~Zhao,~\IEEEmembership{Senior~Member,~IEEE,}
        and Xuemin~(Sherman)~Shen,~\IEEEmembership{Fellow,~IEEE}}
\maketitle

\footnotetext{
M. Li, J. Gao, and X. Shen are with the Department of Electrical and Computer
Engineering, University of Waterloo, Waterloo, ON N2L 3G1, Canada (e-mail:,
m475li@uwaterloo.ca; jie.gao@uwaterloo.ca; sshen@uwaterloo.ca).

N. Cheng (corresponding author) is with State Key Lab. of ISN, and with the School of Telecommunications Engineering, Xidian University, Xian 710071, China (e-mail: nancheng@xidian.edu.cn).

Y. Wang is with the National Mobile Communications Research Laboratory, Southeast University, Nanjing, China (e-mail: yinluwang@seu.edu.cn).

L. Zhao is with the Department of Electrical, Computer and Biomedical Engineering, Ryerson University, Toronto, ON M5B 2K3, Canada (e-mail: l5zhao@ryerson.ca).
} 
\begin{abstract}
%Unmanned Aerial Vehicle (UAV) assisted mobile edge computing (MEC) system can provide efficient and flexible edge computing services for mobile users and Internet of Things (IoT) nodes. 
In this paper, we study unmanned aerial vehicle (UAV) assisted mobile edge computing (MEC) with the objective to optimize computation offloading with minimum UAV energy consumption.  In the considered scenario, a UAV plays the role of an aerial cloudlet to collect and process the computation tasks offloaded by ground users. Given the service requirements of users, we aim to maximize UAV energy efficiency by jointly optimizing the UAV trajectory, the user transmit power, and computation load allocation. The resulting optimization problem corresponds to nonconvex fractional programming, and the Dinkelbach algorithm and the successive convex approximation (SCA) technique are adopted to solve it. Furthermore, we decompose the problem into multiple subproblems for distributed and parallel problem solving. To cope with the case when the knowledge of user mobility is limited, we adopt a spatial distribution estimation technique to predict the location of ground users so that the proposed approach can still be applied. Simulation results demonstrate the effectiveness of the proposed approach for maximizing the energy efficiency of UAV.

\end{abstract}
\begin{IEEEkeywords} 
Unmanned aerial vehicle, mobile edge computing (MEC), energy efficiency, Internet of Things (IoT), alternating direction method of multipliers (ADMM)
 \end{IEEEkeywords}
\section{Introduction}
  
Driven by the visions of Internet of Things (IoT) and 5G communications, mobile edge computing (MEC) is considered as an emerging paradigm that leverages the computing resource and storage space deployed at network edges to perform latency-critical and computation-intensive tasks for mobile users \cite{survey1}.  
The computation tasks generated by mobile users 
%with limited computation capability 
can be offloaded to the nearby edge server, such as macro/small cell base station and Wi-Fi access point, to reduce computation delay and computing energy cost at mobile devices. Moreover, by pushing the traffic, computation, and network functions to the network edges, mobile users can enjoy low task offloading time with less backhaul usage \cite{Gao2}.

{Specifically, in IoT era, MEC is considered as a key enabling technology to support the computing services for billions of IoT nodes to be deployed \cite{Wu_n,Fu2}.} Since the most of IoT nodes are power-constrained and have limited computing compatibility, they can offload their computation tasks to network edges to extend their battery life and improve the computing efficiency. However, many IoT nodes are operating in unattended or challenging areas, such as forests, deserts, mountains, or underwater locations \cite{Mohamed}, to execute some computation-intensive applications, including long pipeline infrastructures monitoring and control \cite{Khan}, underwater infrastructures monitoring \cite{Domingo}, and military operations \cite{Samad}. In these scenarios, the terrestrial communication infrastructures are distributed sparsely and cannot provide reliable communications for the nodes. Therefore, in this paper, we utilize unmanned aerial vehicles (UAVs) to provide ubiquitous communication and computing supports for IoT nodes. Equipped with computing resources, UAV-mounted cloudlet can collect and process the computation tasks of ground IoT nodes that cannot connect to the terrestrial edges. As UAVs are fully controllable and operate at a high altitude, they can be dispatched to the designated places {for providing efficient on-demand communication and computing services to IoT nodes in a rapid and flexible manner \cite{Fu, Zhou,Shi,Cheng_3}.}

Despite the advantages of UAV-assisted MEC, there are several challenges in network deployment and operation. Firstly, 
%in order to provide stable and highly connected communication and computation services, UAV is expected to have enough battery capacity to prolong the lifetime of the network. However, 
the onboard energy of a UAV is usually limited. To improve the user experience on the computing service, UAVs should maximize their energy efficiency by optimizing their computing ability in the limited service time. %Therefore, energy-efficiency is an important performance metric for the UAV-aided MEC system. 
Secondly, planning an energy-aware UAV trajectory is another challenge in UAV-assisted networks. The UAV is required to move to collect the offloaded data from sparsely distributed users for the best channel quality, while a significant portion of UAV energy consumption stems from mechanical actions during flying. 
Thirdly, the computation load allocation cannot be neglected even though the computing energy consumption in UAV-mounted cloudlet is relatively small compared to its mechanical energy. In the state-of-art MEC server architecture, the dynamic frequency and voltage scaling (DVFS) technique is adopted. The computing energy for a unit time is growing cubically as the allocated computation load increases \cite{survey1}. Without proper allocation, the computing energy consumption could blow up,
% if the computing load is allocated too concentrated, 
or the offloaded tasks cannot be finished in time.
More importantly, 
%the workload allocation is directly correlated with the data offloading scheduling, and data offloading by users also depends on the communication link property, \textit{i.e.}, UAV locations, and their quality of service requirements. As a result, 
UAV trajectory design, computation load allocation, and communication resource management are coupled in the MEC system \cite{Hu}, which makes the system even more complex. 
\re{To the best of our knowledge, the joint optimization of UAV trajectory, computation load allocation, and communication resource management considering energy efficiency has not been investigated in the UAV-assisted MEC system.}

To address the above challenges, we consider an energy constrained UAV-assisted MEC system in this paper. {IoT nodes as ground users can access and partially offload their computation tasks to the UAV-mounted cloudlet according to their service requirements.} The UAV flies according to a designed trajectory to collect the offloading data, process computation tasks, and send computing results back to the nodes. For each data collection and task execution cycle, we optimize the energy efficiency of the UAV, which is defined as the ratio of the overall offloaded computing data to UAV energy consumption in the cycle, by jointly optimizing the UAV trajectory and resource allocation in communication and computing aspects. The main contributions of the paper are summarized as follows.
\begin{enumerate}
\item {We develop a model for energy-efficient UAV trajectory design and resource allocation in the MEC system. {The model incorporates computing service improvement and energy consumption minimization in a UAV-mounted cloudlet.} The communication and computing resources are allocated subject to the user communication energy budget, computation capability, and the mechanical operation constraints of the UAV.}
\item 
%We formulate the UAV energy-efficiency maximization problem as a non-convex fractional programming problem. 
{We exploit the successive convex approximation (SCA) technique and Dinkelbach algorithm to transform the non-convex fractional programming problem into a solvable form. In order to improve scalability, we further decompose the optimization problem by the alternating direction method of multipliers (ADMM) technique. UAV and ground users solve the optimization problem cooperatively in a distributed manner. %IoT nodes aim to maximize the offloaded workload in parallel, while the UAV aims to generate a feasible and cost-effective trajectory. 
By our approach, both users and UAV can obtain the optimal resource allocation results iteratively without sharing local information.}

\item {We further consider the scenario with limited knowledge of node mobility. A spatial distribution estimation technique, Gaussian kernel density estimation, is applied to predict the location of ground users. Based on the predicted location information, our proposed strategy can determine an energy-efficient UAV trajectory when the user mobility and offloading requests are ambiguous at the beginning of each optimization cycle.}

\end{enumerate}

The remainder of the paper is organized as follows. Related works are discussed in Section \ref{sec2}. The system model is provided in Section \ref{sec3}. Problem formulation and the corresponding approach are presented in Section \ref{sec4} and \ref{sec5}, respectively. The extended implementation of the proposed approach are provided in Section \ref{sec6}.
Finally, extensive simulation results and conclusions are provided in Sections \ref{sec7} and \ref{sec8}, respectively.

\section{Related Works}
\label{sec2}
\subsection{Mobile Edge Computing}
{To improve the user experience on mobile computing in 5G era, the concept of MEC has been proposed in \cite{WP} to reduce the transmitting and computing latency by utilizing a vast amount of computation resource located at edge devices.} 
The works \cite{Zhang, Mao} consider energy-efficient computing in MEC. 
In \cite{Zhang}, Zhang \textit{et al.} study the total energy consumption minimization in 5G heterogeneous networks. The mobile users make binary offloading decisions to determine where their computation tasks are executed. 
In \cite{Mao}, Mao \textit{et al.} investigate the MEC system with energy harvesting device and propose an online Lyapunov-based method to reduce the computing latency and the probability of task dropping.
The works \cite{Kuang,Rodrigues1,Rodrigues2} study radio resource allocation for computation offloading in edge computing. 
In \cite{Kuang}, Kuang \textit{et al.} propose a partial offloading
scheduling and power allocation approach for single user MEC system and jointly minimize the task execution delay and energy consumption in MEC server while guaranteeing the transmit power constraint of the user.
{In \cite{Rodrigues1,Rodrigues2}, Rodrigues \textit{et al.} investigate transmit power control and service migration policy to balance the computation load among edge servers and reduce the overall computing delay accordingly.}
{The above works consider resource allocation in MEC with fixed edge infrastructures. 
To provide on-demand service for remote IoTs, our work studies edge computing supported by UAV-mounted cloudlet, which introduces dynamic channel conditions and mechanical operation constraints.}

\subsection{UAV-assisted Network}
The UAV-assisted communication network has been investigated in works \cite{Wu_2,Zeng,Tang2}. 
In \cite{Wu_2}, Wu \textit{et al.} consider trajectory design and communication power control for a multi-UAV multi-user system, in which the objective is to maximize the throughput over ground users in a downlink scenario.
In \cite{Zeng}, Zeng \textit{et al.} analyze the energy efficiency of the UAV-assisted communication network and design a UAV trajectory strategy for hovering above a single ground communication terminal. {In \cite{Tang2}, Tang \textit{et al.} investigate a game-based channel assignment scheme for UAVs in D2D-enabled communication networks.}
{UAVs have also been utilized to enhance the flexibility of a MEC system in \cite{Garg,Messous}, where UAVs behave as communication relays to participate in the computation offloading process.}
Moreover, recently, more works utilize UAV as an aerial cloudlet to provide edge computing service \cite{Jeong,Tang1, Cheng}.
In \cite{Jeong},  Jeong \textit{et al.} study UAV path planning to minimize {communication energy consumption for task offloading at mobile users, where the energy consumption of UAV-mounted cloudlet is constrained}. Both orthogonal and non-orthogonal channel models are considered in the work.
{In \cite{Tang1}, Tang \textit{et al.} propose a UAV-assisted recommendation system in location based social networks (LBSNs), while a UAV-mounted cloudlet is deployed to reduce computing and traffic load of the cloud server.}
In \cite{Cheng}, Cheng \textit{et al.} provide the computation load offloading strategy in an IoT network given the pre-determined UAV trajectories. {The work aims to minimize the computing delay, user energy consumption, and server computing cost jointly, where the energy consumption of the UAV-mounted cloudlet has not been investigated.}
None of the above works discusses the energy efficiency on mobile computing in a UAV-mounted cloudlet, which is considered as a meaningful metric for prolonging the computing service lifetime.
Note that although \cite{Zeng} also studies energy-efficient trajectory design, it focuses on a single-ground-terminal scenario, whereas our work focuses on a multi-user scenario with corresponding resource management.
\section{System Model}
\label{sec3}
% Please add the following required packages to your document preamble:
% \usepackage{booktabs}
\begin{table}[]
\centering{
\caption{{List of Symbols}}
\begin{tabular}{@{}ll@{}}
\toprule
Symbol                               & Definition                                                                                                                     \\ \midrule
$k$                                  & index of time slot                                                                                                             \\
$i$                                  & index of ground node/user                                                                                                      \\
$\mathcal{I}$                        & set of users, where $\mathcal{I}=\{1,\dots,N\}$                                                                                                                   \\
$\mathcal{K}$                        & set of time slots, where $\mathcal{K}=\{1,\dots,K\}$                                                                                                              \\
${\mathbf{a}_k} (\mathbf{Q})$        & average acceleration of the UAV in slot $k$                                                                                        \\
$a_{max}$                            & maximum acceleration of the UAV                                                                                                    \\
$B$                                  & channel bandwidth                                                                                                              \\
${E}_{i}^{T}$                        & maximum offloading communication energy of user $i$                                                                            \\
$E_{i,k}^{C,U}(\mathbf{W}_k)$        & \begin{tabular}[c]{@{}l@{}}UAV computing energy for executing tasks from user $i$ \\ in time slot $k$\end{tabular}             \\
$E^F_{k}(\mathbf{Q})$                & UAV propulsion
energy consumption in slot $k$                                                                                         \\
$\hat{E}_{i}^{M}$                    & maximum computing energy consumption of user $i$                                                                               \\
$f^{U}_{k}(\mathbf{W}_k)$            & CPU-cycle frequency in time slot $k$                                                                                           \\
$f_i^M$                              & CPU-cycle frequency of user $i$                                                                                          \\
$h_{i,k} (\mathbf{Q}_k)$             & channel gain for user $i$ in slot $k$                                                                                          \\
$h_x$, $h_y$                         & bandwidth of the 2-D Gaussian kernel                                                                                           \\
$H$                                  & UAV flying altitude                                                                                                               \\
$I_i$                                & overall input data size for computation tasks of user $i$                                                                        \\
$\check{I}_i$                        & minimum input data amount to be offloaded for user $i$                                                                         \\
$K$                                  & number of time slots in a time window                                                                                                                \\
$N$                                  & number of users                                                                                                                \\
$P$                                  & maximum transmit power of a user                                                                                               \\
$\mathbf{q}_{i,k}$                   & horizontal coordinate of user $i$ in slot $k$                                                                                  \\
$\mathbf{Q}_k$                       & horizontal coordinate of the UAV in slot $k$                                                                                       \\
$R_{i,k}(\delta_{i,k},\mathbf{Q}_k)$ & data rate for user $i$ in slot $k$                                                                                             \\
$S_{i,k}({\delta}_{i,k})$            & communication energy for user $i$ in slot $k$                                                                                  \\
$T$                                  & time length of a computing cycle                                                                                               \\
${\mathbf{v}_k}(\mathbf{Q})$         & average velocity of the UAV in slot $k$                                                                                            \\
$v_{max}$                            & maximum velocity of the UAV                                                                                                        \\
$W_{i,k}$                            & \begin{tabular}[c]{@{}l@{}}amount of data offloaded by $i$ to be processed in \\ slot $k$ at the UAV-mounted cloudlet\end{tabular} \\
$\gamma_1$, $\gamma_2$               & UAV propulsion
energy consumption parameters                                                                                          \\
$\delta_{i,k}$                       & \begin{tabular}[c]{@{}l@{}}portion of the maximum power allocated to user $i$ \\ within slot $k$\end{tabular}                  \\
$\Delta$                             & time length of a time slot                                                                                                     \\
$\sigma^2$                           & power spectral density of channel noise                                                                                        \\
$\chi_i$                             & number of computation cycles for executing 1 bit                                                                               \\ \bottomrule
\end{tabular}}
\end{table}
\subsection{Network Model}
\begin{figure}[t]  
 \centering   
% \hspace*{-0.82cm}
  \includegraphics[width=70mm]{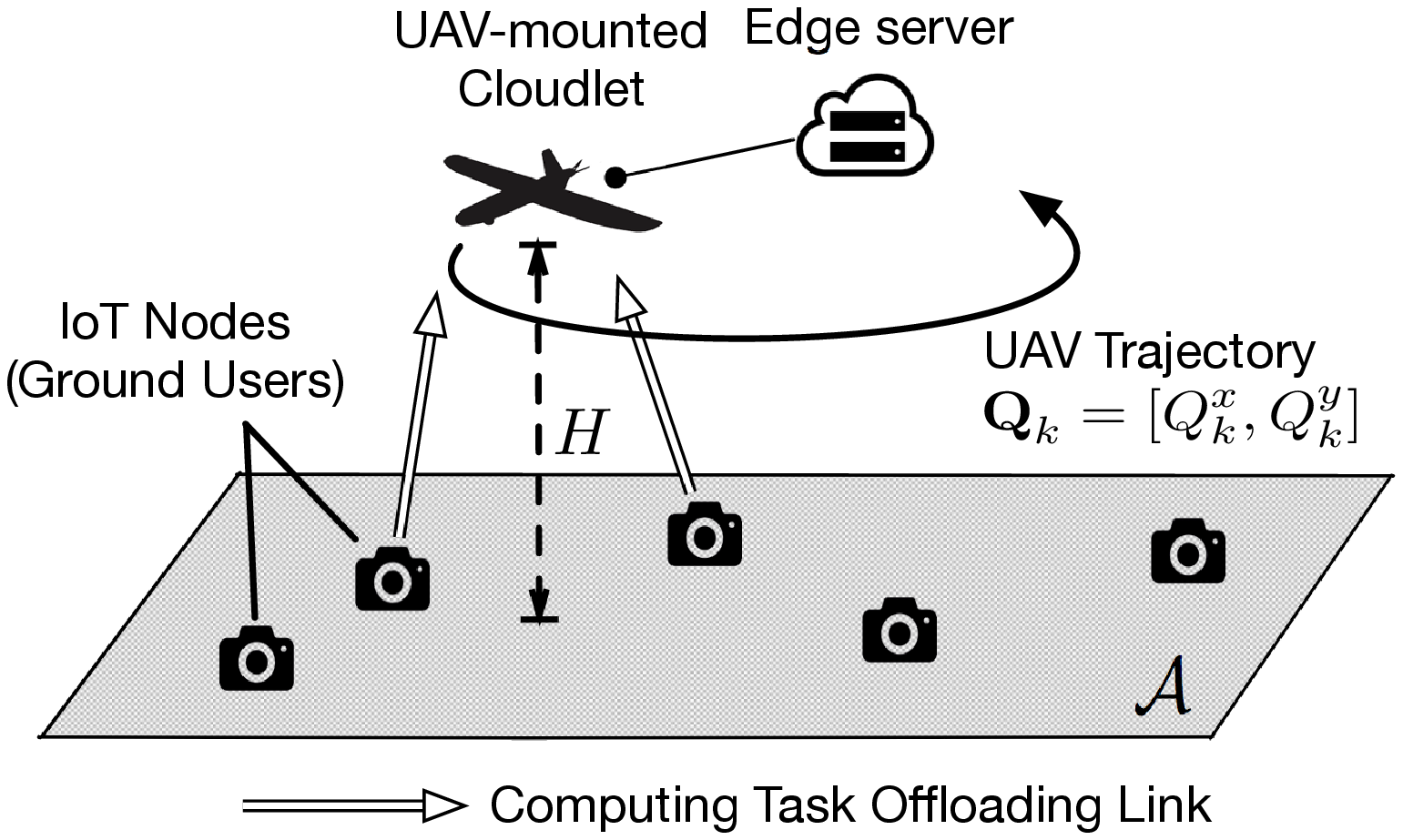}\\  
  \caption{System model.}  
\end{figure} 
{The UAV-assisted MEC system is shown in Fig. 1, in which a single UAV-mounted cloudlet is deployed to offer edge computing service for ground users in area $\mathcal{A}$. 
The UAV periodically collects and processes the computation tasks offloaded from ground users. Each user processes the rest of the computation tasks locally if the task cannot be fully collected by the UAV.} Define the computing cycle as a duration of $T$ seconds. Each cycle contains $K$ discrete time slots with equal length. Denote the set of time slots in the cycle by $\mathcal{K}$. Thus, the time length for a slot is $T/K$, which is denoted by $\Delta$. {The list of symbols is given in Table I.}

At the beginning of each cycle, ground users with computation tasks in area $\mathcal{A}$ send offloading requests to the UAV-mounted cloudlet. 
Denote the set of those ground users by $\mathcal{I}$, where $\mathcal{I}=\{1,\dots,N\}$. Assume the ground users in $\mathcal{I}$ can connect to the UAV for all time slots in the cycle. 
{In this work,} the UAV and the users cooperatively determine the offloading and resource allocation strategy for this cycle, including the UAV moving trajectory, the transmit power of ground users, and computation load allocation for UAV-mounted cloudlet. {Assume that the computation loads on solving the optimization problem are negligible compared to the computation loads of the offloaded tasks.} During the cycle, UAV flies over the ground users and offers the computing service according to the designed trajectory and resource allocation strategy. By the end of the cycle, UAV returns to a predetermined final position. 
%By that time, all offloaded computation tasks should be executed and finished.

\subsection{Communication Model}
The quality of communication links between the UAV and ground users is dependent on their location. To represent their locations, we construct a 3D Cartesian coordinate system. For IoT node $i$, the horizontal coordinate at time $k$ is denoted by $\mathbf{q}_{i,k} = [{q}_{i,k}^x, {q}_{i,k}^y]$. Assume that nodes know their trajectory for the upcoming cycle, \textit{i.e.}, $\{\mathbf{q}_{i,k}, \forall k\}$. For the UAV, the horizontal coordinate at time $k$ is denoted by $\mathbf{Q}_k = [{Q}_{k}^x,{Q}_{k}^y]$. The UAV moves at a fixed altitude $H$. The UAV trajectory plan, as an optimization variable, consists of UAV positions in the whole cycle, \textit{i.e.}, $\mathbf{Q} = [\mathbf{Q}_1; \dots; \mathbf{Q}_K]$. The average UAV velocity in slot $k$ is given by
\begin{equation}
{\mathbf{v}_k}(\mathbf{Q}) = \frac{{\mathbf{Q}_{k}-\mathbf{Q}_{k-1}}}{\Delta}, \forall k.
\end{equation}
The average acceleration in slot $k$ is given by
\begin{equation}
{\mathbf{a}_k} (\mathbf{Q}) = \frac{{\mathbf{v}_{k} (\mathbf{Q})-\mathbf{v}_{k-1} (\mathbf{Q})}}{\Delta}, \forall k.
\end{equation}
{The magnitudes of velocity and acceleration are constrained by the maximum speed and acceleration magnitude, which are denoted by $v_{\textrm{max}}$ and $a_{\textrm{max}}$, respectively.}

It is assumed that the doppler frequency shift in the communication can be compensated at the receiver. The channel quality depends on the distance between the UAV and users. Due to the high probability of LOS links in UAV communication\cite{Zeng}, we assume that the channel gain follows a free-space path loss model. The channel gain for user $i$ in slot $k$ is denoted by $h_{i,k}$, where 
\begin{equation}
h_{i,k} (\mathbf{Q}_k) = \frac{g_0}{\norm{\mathbf{Q}_k-\mathbf{q}_{i,k} }^2 + H^2},
\end{equation}
{where $\norm{\cdot}$ is the notation representing the L2 norm.}
The parameter $g_0$ denotes the received power at the reference distance (e.g., $d = 1$ m) between the transmitter and the receiver. 
We consider two channel access schemes: i) orthogonal access, in which the bandwidth is divided into $N$ sub-channels each occupied by one user; and ii) non-orthogonal access, in which the frequency bandwidth is shared among users. Denote the channel bandwidth for the uplink by $B$. The amount of data that can be offloaded by user $i$ in slot $k$ is
\begin{equation}
R_{i,k}(\delta_{i,k},\mathbf{Q}_k) = \frac{B \Delta}{N} \log\big[1+\frac{\delta_{i,k}h_{i,k}(\mathbf{Q}_k)P}{\sigma^2(B/N)}\big],
\end{equation}
under the orthogonal access model, and,
\begin{equation}
R_{i,k}(\boldsymbol{\delta}_{k}, \!\mathbf{Q}_k) \! = \! {B \Delta}  \!\log\big[1 + \!\frac{\delta_{i,k}h_{i,k}(\mathbf{Q}_k)P}{\sigma^2B \!+ \!\sum_{j \neq i} \!\delta_{j,k}h_{j,k}(\mathbf{Q}_k)P}\big],
\end{equation} 
under the non-orthogonal channel model. The parameter $P$ and $\sigma^2$ denote the maximum transmit power of ground users and the power spectral density of channel noise, respectively. The variable $\delta_{i,k} \in [0,1]$ represents the portion of the maximum power that is allocated to user $i$ within time slot $k$, which is a part of the offloading strategy. The symbol $\boldsymbol{\delta}_{k}$ denotes the vector of $\delta_{i,k}$ for all $i\in\mathcal{I}$ in slot $k$. The noise power in the transmission is represented by $n_0$, where $n_0 = \sigma^2B/N$ for the orthogonal channel access model, and $n_0 = \sigma^2B$ for the non-orthogonal channel access model. {In non-orthogonal model, users share the same channel to offload their tasks. The communication power allocated for a user will interfere the data rate of other users. }

\subsection{Computation Model}
Due to the limited battery and the computing capability of the UAV, only a part of tasks can be offloaded and executed in the UAV-mounted cloudlet. {Full granularity in task partition is considered, where the task-input data can be arbitrarily divided for local and remote executions \cite{Jeong, Wang2,Wang}.} Accordingly, a portion of the computation tasks are offloaded to the cloudlet while the rest are executed by the ground users locally. Users upload the input data for their offloaded tasks, and the UAV processes the corresponding computation loads of those tasks. 
Assume that the computation load can be executed once the input data is received, and the computing data amount is equal to the input data amount of tasks \cite{Jeong}. A task partition technique is considered, where the partition of the computation input bits are utilized to measure the division between the offloaded computation load and local computation load.   
The overall input data size for computation tasks of user $i$ is denoted by $I_i$. We set the threshold $\check{I}_i$ as the minimum input data amount required to be offloaded to the cloudlet for user $i$, where $\check{I}_i\leq I_i$. The threshold represents the part of computation tasks having to be conducted in the cloudlet. Thus, the overall offloaded bits of user $i$ is constrained as follows:
\begin{equation}
\check{I}_i\leq \sum_{k\in \mathcal{K}} R_{i,k}(\boldsymbol{\delta}_{k}, \mathbf{Q}_k) \leq I_i, \forall i. \label{eq.c5}
\end{equation}
Under the scenario that the threshold is satisfied, if user's tasks cannot be fully offloaded, the rest of the tasks are processed by IoT nodes locally.
% where the input data is $I_i - \sum_k R_{i,k}$.

After users upload the input data, the UAV will save the received data to a buffer with enough capacity for further processing. 
%Assume that the buffer has enough capacity. 
The UAV processes the received data according to the workload allocation results. {Let the variable $W_{i,k}$ denote the amount of data, which is from user $i$'s offloaded task, to be processed in slot $k$.} The UAV can only compute the task which is offloaded and received, and all offloaded tasks should be executed by the end of the cycle. Therefore, the following computation constraints are given:
\begin{subequations}
\begin{equation}
\sum_{t = 1}^k R_{i,t}(\boldsymbol{\delta}_{k}, \mathbf{Q}_k)  \geq \sum_{t = 1}^k W_{i,t}, \forall k \label{eq.c1}
\end{equation}    
\begin{equation}
\sum_{t = 1}^K R_{i,t}(\boldsymbol{\delta}_{k}, \mathbf{Q}_k)  = \sum_{t = 1}^K W_{i,t}.\label{eq.c2}
\end{equation}
\end{subequations}

In addition, for the local computing, the CPU-cycle frequency of the IoT node $i$ is fixed as $f_i^M$. For the UAV-mounted cloudlet, 
we consider the CPU featured by DVFS technique. The CPU-cycle frequency can step-up or step-down according to the computation workload and is bounded by the maximum CPU-cycle frequency $f^{U}_{max}$. As given in \cite{survey1, Wang2}, the CPU-cycle frequency for the cloudlet can be calculated by
\begin{equation}
f^{U}_{k} (\mathbf{W}_k) = \dfrac{\sum_i \chi_i W_{i,k}}{\Delta}\leq f^{U}_{max}, \forall k,\label{eq.c6}
\end{equation}
where $f^{U}_{k}(\mathbf{W}_k)$ represents the CPU-cycle frequency in time slot $k$, and $\chi_i$ denotes the number of computation cycles needed to execute 1 bit of data.

\subsection{Energy Consumption Model}
\subsubsection{Energy Consumption at Nodes}
The main energy consumption of nodes are the energy cost from communication and local computing. Firstly, the communication energy for user $i$ offloading tasks in slot $k$ can be formulated as
\begin{equation}
S_{i,k}({\delta}_{i,k}) = \delta_{i,k}P\Delta .
\end{equation}
The overall offloading communication energy of user $i$ is bounded by  ${E}_{i}^{T}$, \textit{i.e.}, 
\begin{equation}
\sum_{k} S_{i,k}({\delta}_{i,k})\leq {E}_{i}^{T} , \forall i. \label{eq.c4}
\end{equation}
Therefore, the energy consumption of a user on communication can be reduced if the UAV is closer. On the other hand, for the computing energy consumption, we consider that the lower bound of offloaded bits $\check{I}_i$ guarantees the local computing energy under the user's computing energy requirement, \textit{i.e.}, 
\begin{equation}
{E}_{i}^{M} =  \kappa\chi_i (I_i-\check{I}_i) (f_i^M)^2 \leq \hat{E}_{i}^{M},
\end{equation}
where ${E}_{i}^{M}$ is the maximum computing energy that could be reached by threshold $\check{I}_i$, and $\hat{E}_{i}^{M}$ is the parameter representing the constraint of the computing energy consumption. The computing energy model is adopted from \cite{survey1, Yuan}. Parameters $f_i^M$ and $\kappa$ represent the fixed  CPU-cycle frequency of user $i$ and a constant related to the hardware architecture, respectively.

\subsubsection{Energy Consumption at UAV-mounted Cloudlet}
The main energy consumption at the UAV-mounted cloudlet consists of the energy cost from mechanical operation and computing. Although downlink transmission exists in our system, this part of energy consumption is negligible for two reasons: 1) The communication energy is too small compared to the UAV propulsion and computing energy. 2) The output computing results usually have much less data amount compared to the input data amount \cite{Li}. We adopt the refined UAV propulsion energy consumption model for fixed-wing UAV following \cite{Zeng} \footnote{We deploy the fixed-wing UAV in the proposed system as an example. The proposed approach also can be adapted to the system with a quad-rotor UAV, where only the mechanical
energy consumption model is different.}. The propulsion
energy consumption in slot $k$ relates to the instantaneous UAV acceleration and velocity, which is given by
\begin{equation}
E^F_{k}(\mathbf{Q}) = \gamma_1 \norm{{\mathbf{v}_k}(\mathbf{Q})}^3 + \frac{\gamma_2}{\norm{{\mathbf{v}_k}(\mathbf{Q})}}(1+\frac{\norm{a_k(\mathbf{Q})}^2}{g^2}),
\end{equation}
where $g$ denotes the gravitational acceleration. {$\gamma_1$ and $\gamma_2$ are fixed parameters related to the aircraft's weight, wing area, air density, etc.} The value of parameters is given in \cite{Zeng,Jeong}. 
The computing energy for executing tasks from user $i$ in time slot $k$ is expressed as
\begin{equation}
E_{i,k}^{C,U}(\mathbf{W}_k) = \kappa \chi_i W_{i,k}\left(f_k^U(\mathbf{W}_k))^2. \right. \label{eq.c60}
\end{equation} 

\section{Problem Formulation}
\label{sec4}

In this work, the main objective is to maximize the energy efficiency of the UAV-mounted cloudlet subject to user offloading constraints, UAV computing capabilities, and the mechanical constraints of the UAV. The energy efficiency of the UAV is defined as the ratio between the overall offloaded data and the energy consumption of the UAV in a cycle. The energy efficiency maximization problem is formulated as follows.
\begin{subequations}
\label{eq:optProb2}
\begin{align}
&\max_{\boldsymbol{\delta}, \mathbf{W}, \mathbf{Q}} && \eta = \frac{\sum_{i\in \mathcal{I}} \sum_{k\in \mathcal{K}} R_{i,k}(\boldsymbol{\delta}_{k}, \mathbf{Q}_k)}{\sum_{k\in \mathcal{K}}\sum_{i\in \mathcal{I}}E_{i,k}^{C,U}(\mathbf{W}_k) +\sum_{k\in \mathcal{K}}E_{k}^{F}(\mathbf{Q})} \tag{14}\\
&  \text{s.t.}&&  \norm{ \mathbf{v}_k(\mathbf{Q})} \leq v_{max}, \forall k, \label{eq.c7} \\ 
&&& \norm{ \mathbf{a}_k(\mathbf{Q})} \leq a_{max}, \forall k, \label{eq.c9} \\ 
&&& \mathbf{Q}_{K} = \mathbf{Q}_f, {\mathbf{v}_K(\mathbf{Q})} = {\mathbf{v}_0},\label{eq.c8}\\ 
&&& 0\leq \delta_{i,k}\leq 1,\label{eq.c10} \\ 
&&&  (\ref{eq.c5}), (\ref{eq.c1}),(\ref{eq.c2}), (\ref{eq.c6}),(\ref{eq.c4}). \notag
\end{align}
\end{subequations}
The term $\mathbf{Q}_f$ represents the designated final position of the UAV, and $\mathbf{v}_0$ represents the initial velocity at the beginning of the cycle. The constraints can be categorized into three types: 1)  user QoS constraints, including (\ref{eq.c5}), (\ref{eq.c4}), and (\ref{eq.c10}); 2) UAV computing ability constraints, including (\ref{eq.c1}), (\ref{eq.c2}), and (\ref{eq.c6}); 3) UAV mechanical constraints, including (\ref{eq.c7}), (\ref{eq.c9}), and (\ref{eq.c8}).
The optimization problem is a non-linear fractional programming. {In addition, due to the interference among users in the non-orthogonal channel and the propulsion energy consumption for the fixed-wing UAV, both functions $R_{i,k}(\boldsymbol{\delta}_{k} , \mathbf{Q}_k)$ and $E_{k}^{F}(\mathbf{Q})$ are non-convex.
Therefore, solving optimization problem (\ref{eq:optProb2}) is challenging.} To search the global optimizer of a non-convex problem is often slow and may not be feasible. In the following section, we will propose an approach to find a local optima efficiently.
\section{Proposed Optimization Approach}
\label{sec5}
In this section, an optimization approach is introduced to find a solution of problem (\ref{eq:optProb2}). Firstly, an inner convex approximation method is applied to approximate the non-convex functions $R_{i,k}(\boldsymbol{\delta}_{k} , \mathbf{Q}_k)$ and $E_{k}^{F}(\mathbf{Q})$ by solvable convex functions. The SCA-based algorithm is adopted to achieve the local optimizer of the original problem.
After the approximated convex functions are built, the fraction programming in the inner loop of the SCA-based algorithm is handled by the Dinkelbach algorithm. Moreover, in order to improve scalability, the problem is further decomposed into several sub-problems via ADMM technique, in which the power allocation is solved by users in a distributed manner, while the computation load allocation and UAV trajectory planning are determined by UAV itself. 
The details are presented in following subsections.
\subsection{Successive Convex Approximation}
Problem (\ref{eq:optProb2}) is a non-convex problem due to $R_{i,k}(\boldsymbol{\delta}_{k} , \mathbf{Q}_k)$ and $E_{k}^{F}(\mathbf{Q})$. To construct an approximation that is solvable, we first introduce several auxiliary variables, $\{\xi_{i,k},\omega_k, l_{i,k}, A_k,\check{{R}}_{i,k},\hat{{E}}_{i,k}^{F}\}$. For the orthogonal channel access scheme, the new optimization problem is shown as follows:

\begin{subequations}
\label{eq:optProb3}
\begin{align}
&\max_{\mathcal{V}} && \check\eta(\mathcal{V}) = \frac{\sum_{i\in \mathcal{I}} \sum_{k\in \mathcal{K}} \check{R}_{i,k}}{\sum_{k\in \mathcal{K}}\sum_{i\in \mathcal{I}}E_{i,k}^{C,U}(\mathbf{W}_k) +\sum_{k\in \mathcal{K}}\hat{E}_{k}^{F}}	\tag{15} \label{obj:optProb3}\\
&  \text{s.t.}&&  \check{R}_{i,k} \leq \frac{B \Delta}{N} \log(1+\xi_{i,k}), \forall i, k  \label{eq.c11}\\
&&& \xi_{i,k} l_{i,k} \leq {\delta_{i,k}P}, \forall i, k  \label{eq.c12}\\
&&& \frac{(\norm{\mathbf{Q}_k-\mathbf{q}_{i,k} }^2 + H^2)n_0}{g_0} \leq l_{i,k}, \forall i, k  \label{eq.c13}\\
&&&  \hat{E}_{k}^{F} \geq \gamma_1 \norm{{\mathbf{v}_k}(\mathbf{Q})}^3 + \gamma_2 A_k , \forall  k  \label{eq.c14}\\
&&& \omega_k^2 \leq \norm{\mathbf{v}_k(\mathbf{Q})}^2 , \forall  k \label{eq.c15}\\
&&& \omega_k A_k \geq 1+\frac{\norm{a_k(\mathbf{Q})}^2}{g^2}, \forall  k \label{eq.c16} \\
&&& \check{I}_i\leq \sum_{k\in \mathcal{K}} \check{R}_{i,k}\leq I_i, \forall i, \label{eq.c18}\\
&&&  (\ref{eq.c5}), (\ref{eq.c1}),(\ref{eq.c2}), (\ref{eq.c6}),(\ref{eq.c4}), (\ref{eq.c7})-(\ref{eq.c10}).\notag
\end{align}
\end{subequations}
Set $\mathcal{V}$ represents the union set of the primary and auxiliary optimization variables, where  $\mathcal{V} = \{ \boldsymbol{\delta}, \mathbf{W}, \mathbf{Q}, \boldsymbol{\xi},\boldsymbol{\omega},$ $\mathbf{l},\mathbf{A},\check{\mathbf{R}},\hat{\mathbf{E}}^{F}\}$. 
For the non-orthogonal channel model, constraint (\ref{eq.c11}) is replaced by the following constraint: 
\begin{equation}
\check{R}_{i,k} \leq {B \Delta} \big[\log(1+\sum_{i\in \mathcal{I}}\xi_{i,k})-  \log(1+\sum_{j\in {\mathcal{I}/\{i\}}}\xi_{j,k})\big], \forall i, k.\tag{15h} \label{eq.c11_no} 
\end{equation}

\begin{lemma}
\label{lemma.1}
Problem (\ref{eq:optProb3}) is an equivalent form of problem~(\ref{eq:optProb2}).
\end{lemma}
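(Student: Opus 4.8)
The plan is to treat problem~(\ref{eq:optProb3}) as an epigraph-style lifting of problem~(\ref{eq:optProb2}) and to prove equivalence by exhibiting a value-preserving correspondence between the two feasible sets in both directions. Concretely, I would show that (i) every feasible point of~(\ref{eq:optProb2}) lifts to a feasible point of~(\ref{eq:optProb3}) with the same objective value, obtained by assigning the auxiliary variables $\{\xi_{i,k},\omega_k,l_{i,k},A_k,\check{R}_{i,k},\hat{E}_{k}^{F}\}$ their natural values; and (ii) at every optimizer of~(\ref{eq:optProb3}) the auxiliary inequality constraints~(\ref{eq.c11})--(\ref{eq.c16}) are active, so the auxiliary variables collapse to deterministic functions of $(\boldsymbol{\delta},\mathbf{W},\mathbf{Q})$ that reproduce $R_{i,k}$ and $E^F_k$ exactly, yielding a feasible point of~(\ref{eq:optProb2}) with identical objective. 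Together these give $\check\eta^\star=\eta^\star$ and a one-to-one map between optimizers.

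I would dispatch the propulsion-energy block first, as it is the routine part. The variable $\hat{E}_{k}^{F}$ enters only the denominator of $\check\eta$ and only through~(\ref{eq.c14}); since the objective is decreasing in the denominator, any optimizer must satisfy~(\ref{eq.c14}) with equality. This pushes $A_k$ down to the bound in~(\ref{eq.c16}), which in turn forces $\omega_k$ up to the bound in~(\ref{eq.c15}); because $\omega_k$ and $A_k$ appear nowhere else, both constraints are active at the optimum, giving $\omega_k=\norm{\mathbf{v}_k(\mathbf{Q})}$ and $A_k=(1+\norm{\mathbf{a}_k(\mathbf{Q})}^2/g^2)/\norm{\mathbf{v}_k(\mathbf{Q})}$. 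Substituting recovers $\hat{E}_{k}^{F}=E^F_k(\mathbf{Q})$, and the same monotonicity run in reverse supplies the lifting in direction (i).

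For the rate block under orthogonal access I would eliminate the auxiliaries in the order $l_{i,k}\to\xi_{i,k}\to\check{R}_{i,k}$. Constraint~(\ref{eq.c13}) lower-bounds $l_{i,k}$, and since $l_{i,k}$ appears only there and in~(\ref{eq.c12}) on the side relaxed by decreasing it, we may take~(\ref{eq.c13}) tight, so $l_{i,k}=(\norm{\mathbf{Q}_k-\mathbf{q}_{i,k}}^2+H^2)n_0/g_0$. Then~(\ref{eq.c12}) reduces to $\xi_{i,k}\le \delta_{i,k}h_{i,k}(\mathbf{Q}_k)P/n_0$, and because $\check{R}_{i,k}$ increases the numerator while~(\ref{eq.c11}) is increasing in $\xi_{i,k}$, both are driven to equality at the optimum, yielding $\check{R}_{i,k}=R_{i,k}(\delta_{i,k},\mathbf{Q}_k)$. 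This identity is what transfers the reformulated~(\ref{eq.c18}) back into the original~(\ref{eq.c5}) and, crucially, lets the equality constraint~(\ref{eq.c2}) carry over verbatim.

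The hard part will be the converse tightness under the non-orthogonal model~(\ref{eq.c11_no}). Substituting $\xi_{i,k}=\delta_{i,k}h_{i,k}(\mathbf{Q}_k)P/n_0$ still reproduces the SINR expression, since the two logarithms combine into $\log\big(1+\delta_{i,k}h_{i,k}P/(n_0+\sum_{j\neq i}\delta_{j,k}h_{j,k}P)\big)$, so the lifting direction (i) goes through unchanged. The difficulty is that the right-hand side of~(\ref{eq.c11_no}) is increasing in $\xi_{i,k}$ but decreasing in each $\xi_{j,k}$ with $j\neq i$, so the elementwise monotonicity used in the orthogonal case breaks down: raising one user's $\xi$ to its bound tightens every other user's rate bound, and I therefore cannot argue each chain~(\ref{eq.c11_no})--(\ref{eq.c13}) is active in isolation. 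I expect to close this either by a KKT/complementary-slackness argument showing that no $\xi_{i,k}$ can be strictly below $\delta_{i,k}h_{i,k}P/n_0$ at an optimizer, or by invoking the coupling constraints — the per-user lower bounds in~(\ref{eq.c18}) together with the processing relations~(\ref{eq.c1})--(\ref{eq.c2}) — to exclude optimal points with $\check{R}_{i,k}<R_{i,k}$. Establishing this activation rigorously for the non-orthogonal case is the crux of the argument.
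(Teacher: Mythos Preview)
Your approach is essentially the paper's: Appendix~A introduces exactly the chains $\check{R}_{i,k}\le \frac{B\Delta}{N}\log(1+\xi_{i,k})\le R_{i,k}$ and $\hat{E}^F_k\ge \gamma_1\|\mathbf{v}_k\|^3+\gamma_2 A_k\ge E^F_k$, and then argues that at the maximizer of~(\ref{obj:optProb3}) the numerator is maximized and the denominator minimized, forcing all intermediate inequalities to hold with equality. Your treatment is in fact more careful than the paper's, which does not single out the non-orthogonal cross-coupling you flag; the published proof simply asserts tightness at the optimum without a separate argument for~(\ref{eq.c11_no}), so the subtlety you identify is real but is not resolved in Appendix~A either.
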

\begin{proof}
See Appendix A.
\end{proof}

Problem (\ref{eq:optProb3}) includes four non-convex constraints, which are (\ref{eq.c12}), (\ref{eq.c15}), (\ref{eq.c16}), and (\ref{eq.c11_no}).
We approximate those non-convex constraints by their first order Taylor expansions and adopt the successive convex optimization technique to solve the problem. 
New auxiliary variables, $\{\xi^t_{i,k}, l^t_{i,k}, \omega^t_{k},A^t_{k},\mathbf{v}_k, z^t_{i,k}\}$, are introduced to represent the corresponding estimated optimizers at the previous iteration of optimization, \textit{i.e.}, iteration $t$. The SCA-based algorithm iterates until the estimated solution reaches to a local optimizer. {Constraint (\ref{eq.c12}) can be approximated as follows:
\begin{equation}
\norm{\xi_{i,k}+l_{i,k}, {\xi^t_{i,k}-l^t_{i,k}}, x_{i,k} -1} \leq x_{i,k}+1,\label{eq.c19}
\end{equation}
where
\begin{equation}
x_{i,k} = \delta_{i,k}P -\frac{(\xi^t_{i,k}-l^t_{i,k})(\xi_{i,k}-l_{i,k})}{2}. \notag
\end{equation}
Constraint (\ref{eq.c15}) can be approximated as follows:
\begin{equation}
\omega_{k}^2 \leq \norm{\mathbf{v}_k^t}^2+2(\mathbf{v}_k^t)^T (\mathbf{v}_k(\mathbf{Q})-\mathbf{v}_k^t).\label{eq.c21}
\end{equation}
Constraint (\ref{eq.c16}) can be approximated as follows:
\begin{equation}
\norm{\omega_{k}-A_{k}, {\omega^t_{k}+A^t_{k}}, y_{k} -1, 2, \frac{2a_k(\mathbf{Q}_k)}{g}} \leq y_{k}+1, \label{eq.c20}
\end{equation}
where
\begin{equation}
y_{k} = \frac{(\omega^t_{k}+A^t_{k})(\omega_{k}+A_{k})}{2}. \notag
\end{equation}
Constraint (\ref{eq.c11_no}) can be approximated as follows:
\begin{equation}
\check{R}_{i,k}\! \leq \!\frac{B \Delta}{N} \!\big[\log(1+\xi_{i,k}+e_{i,k})- \log(1+e^t_{i,k})-\frac{e_{i,k}-e^t_{i,k}}{\ln2(1+e^t_{i,k})}\big], \label{eq.c22}
\end{equation}
where $e_{i,k} = \sum_{j\in {\mathcal{I}/\{i\}}} \xi_{i,k}$.}

\begin{lemma}
\label{lemma.2}
Non-convex constraints (\ref{eq.c12}), (\ref{eq.c15}), (\ref{eq.c16}), and (\ref{eq.c11_no}) can be approximated by the convex forms in (\ref{eq.c19})-(\ref{eq.c22}).
%by maintaining an
%estimate of the solution x
%(k) and a convex trust region
The solution of the approximated problem is a local maximizer of problem (\ref{eq:optProb2}), which provides the lower bound of the maximum energy efficiency that can be achieved.
%, in which non-convex constraints is substituted by relaxed constraints (\ref{eq.c18})-(\ref{eq.c21}), 
\end{lemma}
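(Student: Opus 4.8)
The plan is to prove the two assertions of the lemma separately: first, that each of (\ref{eq.c19})--(\ref{eq.c22}) is a \emph{convex inner approximation} of the corresponding non-convex constraint in (\ref{eq:optProb3}); and second, that running SCA on these approximations yields a point that is a local maximizer of (\ref{eq:optProb2}) whose objective value lower-bounds the optimal energy efficiency. For the first part I would verify, for every constraint, the three properties that make SCA well posed: the surrogate constraint is convex, it coincides in value with the true constraint at the expansion point $t$, and its gradient there agrees with that of the true constraint. Because the surrogates are built from global tangent bounds, these together imply that the approximate feasible set is contained in the true one (a genuine restriction, not merely a local match) and touches its boundary at the current iterate.

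For the two bilinear constraints (\ref{eq.c12}) and (\ref{eq.c16}) I would use the difference-of-convex identity $ab=\frac{1}{4}\big[(a+b)^2-(a-b)^2\big]$ to split each indefinite product into a convex square minus a concave square, and then linearize the offending term at iteration $t$. In (\ref{eq.c12}), written as a ``convex $\le$ constant'' set, the spoiler is the concave term $-\frac{1}{4}(\xi_{i,k}-l_{i,k})^2$; replacing it by its first-order Taylor expansion (a global overestimate of a concave function) restores convexity and tightens the constraint. In (\ref{eq.c16}), written as a ``concave $\ge$ constant'' set, the spoiler is instead the convex term $\frac{1}{4}(\omega_k+A_k)^2$; replacing it by its tangent underestimate restores concavity of the left-hand side and again shrinks the feasible set. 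Collecting terms and completing the square, each linearized quadratic inequality can be recast as the second-order-cone forms (\ref{eq.c19}) and (\ref{eq.c20}); I would check this by squaring the SOC inequalities and confirming they reproduce the linearized inequalities term by term, including the auxiliary quantities $x_{i,k}$ and $y_k$.

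The remaining two constraints are one-term linearizations. In (\ref{eq.c15}) the right-hand side $\norm{\mathbf{v}_k(\mathbf{Q})}^2$ is convex; replacing it by its first-order Taylor underestimate at $\mathbf{v}_k^t$ gives the affine lower bound in (\ref{eq.c21}), so the surrogate is quadratic in $\omega_k$ and affine in $\mathbf{Q}$, hence convex, and the underestimate guarantees the restriction property. In the non-orthogonal rate constraint (\ref{eq.c11_no}) the right-hand side is a difference of concave logarithms; the concave term $\log(1+\sum_{j\neq i}\xi_{j,k})$ enters with a minus sign, so I would replace it by its first-order Taylor overestimate at $e^t_{i,k}$, which yields (\ref{eq.c22}) with a concave right-hand side and therefore a convex constraint. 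In each case the tangent substitution produces a one-sided bound in exactly the direction that makes the surrogate feasible set a subset of the original, with equality at the expansion point and matching gradients there.

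The second assertion then follows from the standard SCA machinery, and this is where the main work lies. Because every surrogate constraint is a convex restriction that is tight to first order at $\mathcal{V}^t$, the previous iterate is always feasible for the next subproblem, so the energy-efficiency value produced by the Dinkelbach-solved inner problem is monotonically non-decreasing; boundedness of the feasible set then forces convergence. The value- and gradient-consistency conditions ensure that the KKT system of each surrogate problem reduces, at a fixed point, to the KKT system of (\ref{eq:optProb3}), and hence of (\ref{eq:optProb2}) by Lemma~\ref{lemma.1}, so any limit point is a stationary point, i.e.\ a local maximizer. Finally, since each subproblem optimizes the objective over a subset of the true feasible region, its optimal value cannot exceed the global optimum, which delivers the claimed lower bound. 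The delicate step here is the passage to KKT stationarity: I would need to verify a constraint qualification (e.g.\ that the approximate problems satisfy Slater's condition or MFCQ) so that the SCA stationarity theorem applies, and confirm that introducing the auxiliary variables in (\ref{eq:optProb3}) does not create spurious stationary points relative to the original variables.
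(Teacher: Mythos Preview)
Your proposal is correct and follows essentially the same route as the paper's proof: the paper also rewrites (\ref{eq.c12}) via the difference-of-convex identity $(\xi+l)^2-(\xi-l)^2\le 4\delta P$, linearizes the offending quadratic by first-order Taylor expansion, and recasts the result as the SOC (\ref{eq.c19}), treats (\ref{eq.c16}) ``in a similar way,'' and handles (\ref{eq.c15}) and (\ref{eq.c11_no}) by straight tangent bounds, then argues from the restriction property that the approximate optimum lower-bounds the true one and from gradient consistency that SCA terminates at a local optimizer. Your treatment is in fact more careful than the paper's---you identify explicitly which square must be linearized in each of the two bilinear cases, and you flag the need for a constraint qualification to justify the KKT-stationarity conclusion, which the paper does not verify.
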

\begin{proof}
See Appendix B.
\end{proof}
Based on Lemma \ref{lemma.1} and Lemma \ref{lemma.2}, the SCA-based algorithm is summarized by Algorithm \ref{al.1}. The term ${\check\eta(\mathcal{V}; \mathbb{A}^{t})}$ represents the energy efficiency $\check\eta(\mathcal{V})$ in (\ref{eq:optProb3}) with the given value in auxiliary variable set $\mathbb{A}^{t}$.                         
Note that the approximated problem inside the loop (Steps 3 and 4 in Algorithm 1) is {a fractional programming problem and still non-convex.} 
We will provide the optimal solution of the approximated problem in the remainder of the section. The convergence of SCA has been proven in \cite{Boyd}, and the algorithm will stop after finite iterations if the local optimizer exists.
\begin{algorithm}
\caption{SCA-based Algorithm for Solving Problem (\ref{eq:optProb3})} 
\label{al.1}
\begin{algorithmic}[1]
\State {Initialize the auxiliary variables $\mathbb{A}^0 = \{\xi_{i,k}^0,\omega_k^0, l_{i,k}^0, A_k^0,\check{{R}}_{i,k}^0,\hat{{E}}_{i,k}^{F,0}\}$ and loop index $t=0$.}
\Repeat
\parState {Solve the approximated problem (\ref{obj:optProb4}) for given $\mathbb{A}^t$, and denote the optimal solution for auxiliary variables by $\mathbb{A}^{t+1}$: }
{\begin{align}
&\max_{\mathcal{V}} && \check\eta(\mathcal{V}; \mathbb{A}^{t}) \label{obj:optProb4}\\
&  \text{s.t.}&&  (\ref{eq.c5}),  (\ref{eq.c1}),(\ref{eq.c2}), (\ref{eq.c6}),(\ref{eq.c4}), (\ref{eq.c7})-(\ref{eq.c10}),  \notag\\ 
&&& (\ref{eq.c13}), (\ref{eq.c14}), (\ref{eq.c18}), (\ref{eq.c19})-(\ref{eq.c21}), \notag\\
&&& (\ref{eq.c11})\textrm{, in the case of orthogonal channel}, \notag\\
&&& (\ref{eq.c22}) \textrm{, in the case of non-orthogonal channel}.\notag
\end{align}}
\State Update $t = t+1$.
\Until{ The difference of the solutions between two adjacent iterations, \textit{i.e.},  $\|\mathbb{A}^{t+1} - \mathbb{A}^{t}\|$, is below a threshold $\theta_1$.}
\end{algorithmic}
\end{algorithm}

\subsection{Dinkelbach Algorithm}
Problem (\ref{obj:optProb4})  is a fraction programming problem. We can adopt the Dinkelbach algorithm to achieve the optimal solution. The objective function (\ref{obj:optProb4}) can be rewritten as the following parametric programming form:
\begin{align}
&F^t(\alpha) = \max_{\mathcal{V}}\Big\{\sum_{k\in \mathcal{K}}\sum_{i\in \mathcal{I}}\check{R}_{i,k} - \alpha \big[\sum_{k\in \mathcal{K}}\sum_{i\in \mathcal{I}}E_{i,k}^{C,U}(\mathbf{W}_k) \notag\\
& \hspace{4.5cm}+\sum_{k\in \mathcal{K}}\hat{E}_{k}^{F} \big]|{\mathcal{V} \in \mathcal{F}^t}\Big\}, \label{eq.dickel}
\end{align}
where $\mathcal{F}^t$ represents the feasible set of problem (\ref{obj:optProb4}) at the $t$-th iteration in Algorithm 1. The function $F^t(\alpha)$ is a monotonic decreasing function of $\alpha$. Let the term $\alpha^*$ denote the solution of $F^t(\alpha^*)=0$. {Due to the monotone decreasing property of $F^t(\alpha)$}, $F^t(\alpha^*)=0$ if and only if $\alpha^*$ is equal to the optimal result of problem (\ref{obj:optProb4}), \textit{i.e.}, $\alpha^* = \check{\eta}(\mathcal{V}^*; \mathbb{A}^{t})$~\cite{Dinkelbach}. The algorithm for solving problem (\ref{obj:optProb4}) is shown in Algorithm~\ref{al.2}.
\begin{algorithm}
\caption{Dinkelbach Algorithm for Solving Problem (\ref{obj:optProb4})} 
\label{al.2}
\begin{algorithmic}[1]
\State {Initialize $\alpha^0 = 0$ if $t = 0$, $\alpha^0 = \alpha^*$ in loop $t-1$ if $t \geq 0$, and the loop index $m = 0$.}
\Repeat
\State Solve problem (\ref{eq.dickel}) for given $\alpha^m$, and denote the solution for the problem by $\mathcal{V}^{m}_d$.
\State Update the Dinkelbach auxiliary variable $\alpha^{m+1} = \check{\eta}(\mathcal{V}^{m}_d; \mathbb{A}^{t})$.
\State $m = m+1$.
\Until{$F^t (\alpha^{m+1}) \leq \theta_2$. }
\end{algorithmic}
\end{algorithm}

Due to the nature of the SCA-based algorithm and Dinkelbach algorithm, we can further cut the iteration times based on the following Lemma.
\begin{lemma}
\label{lemma4}
Denote the optimal Dinkelbach parameter $\alpha^*$ for two consecutive SCA iterations by $\alpha^*(t-1)$ and $\alpha^*(t)$. 
We have $\alpha^*(t-1) \leq \alpha^*(t)$, and $F^t(\alpha^*(t-1)) \geq F^t(\alpha^*(t)) = 0$.
\end{lemma}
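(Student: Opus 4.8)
The plan is to reduce the lemma to the monotonicity structure of the Dinkelbach parametrization combined with the feasibility-preservation property of the inner convex approximation. First I would note that the two assertions in the statement are not independent. The text already establishes that $F^t(\alpha)$ is monotonically decreasing in $\alpha$ and that $F^t(\alpha^*(t))=0$ holds by the very definition of $\alpha^*(t)$ as the root of $F^t$. Consequently, the chain $F^t(\alpha^*(t-1))\geq F^t(\alpha^*(t))=0$ is equivalent to $\alpha^*(t-1)\leq\alpha^*(t)$. Hence it suffices to prove the single monotonicity inequality $\alpha^*(t-1)\leq\alpha^*(t)$, and then read off the second claim directly from the monotone decreasing property of $F^t$.

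To prove $\alpha^*(t-1)\leq\alpha^*(t)$, I would use the Dinkelbach correspondence cited in the text: $\alpha^*(t)$ equals the optimal energy efficiency $\check\eta(\mathcal{V}^*;\mathbb{A}^t)$ of the approximated problem (\ref{obj:optProb4}) at SCA iteration $t$, and likewise $\alpha^*(t-1)$ equals its optimum at iteration $t-1$. The crucial observation is that the objective $\check\eta(\mathcal{V})$ itself is invariant across SCA iterations; only the feasible set $\mathcal{F}^t$ changes, because the auxiliary set $\mathbb{A}^t$ is built from the optimizer $\mathcal{V}^\star_{t-1}$ of the previous iteration, about which the nonconvex constraints are linearized. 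The heart of the argument is therefore to show that this previous optimizer is itself feasible for the current problem, i.e. $\mathcal{V}^\star_{t-1}\in\mathcal{F}^t$. Granting this, $\alpha^*(t)=\max_{\mathcal{V}\in\mathcal{F}^t}\check\eta(\mathcal{V})\geq\check\eta(\mathcal{V}^\star_{t-1})=\alpha^*(t-1)$, which is exactly the desired inequality.

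The step I expect to be the main obstacle is verifying $\mathcal{V}^\star_{t-1}\in\mathcal{F}^t$, and for this I would invoke the tightness of each first-order surrogate at its expansion point, which is precisely the property underlying Lemma~\ref{lemma.2}. Each approximated constraint (\ref{eq.c19})--(\ref{eq.c22}) is obtained by replacing a convex term inside the original nonconvex constraint by its first-order Taylor expansion taken at $\mathcal{V}^\star_{t-1}$; since a convex function coincides with its linearization at the expansion point, every surrogate constraint holds with equality there and is thus satisfied at $\mathcal{V}^\star_{t-1}$. The remaining constraints in (\ref{obj:optProb4}) are the convex ones carried over unchanged from (\ref{eq:optProb3}), and $\mathcal{V}^\star_{t-1}$ already satisfies these as a feasible point of iteration $t-1$. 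Hence $\mathcal{V}^\star_{t-1}$ meets every constraint defining $\mathcal{F}^t$, establishing the feasibility claim.

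Finally I would close by substituting back: from $\alpha^*(t)\geq\alpha^*(t-1)$ together with the monotone decreasing property of $F^t$ we obtain $F^t(\alpha^*(t-1))\geq F^t(\alpha^*(t))$, and $F^t(\alpha^*(t))=0$ by definition, yielding $F^t(\alpha^*(t-1))\geq F^t(\alpha^*(t))=0$. This also rationalizes the warm-start in Algorithm~\ref{al.2}, where $\alpha^0$ is initialized at the previous iteration's $\alpha^*$: because $F^t(\alpha^*(t-1))\geq 0$ while $F^t$ is decreasing with its unique root at $\alpha^*(t)\geq\alpha^*(t-1)$, the Dinkelbach search need only advance in one direction, reducing the iteration count.
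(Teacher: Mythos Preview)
Your proposal is correct and follows essentially the same approach as the paper's proof: both identify $\alpha^*(t)$ with the optimum of $\check\eta$ over $\mathcal{F}^t$ via the Dinkelbach correspondence, invoke the SCA property (from Lemma~\ref{lemma.2}) that the previous iterate $\mathcal{V}^\star_{t-1}$ is feasible for $\mathcal{F}^t$ because the first-order surrogates are tight at their expansion point, and then read off $F^t(\alpha^*(t-1))\geq F^t(\alpha^*(t))=0$ from the monotone decrease of $F^t$. Your write-up is somewhat more explicit than the paper's in spelling out why feasibility is preserved and in noting that the objective $\check\eta$ is itself independent of $\mathbb{A}^t$, but the underlying argument is the same.
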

\begin{proof}
Denote the optimization results and the corresponding Dinkelbach parameter at iteration $t-1$ by $\mathcal{V}^*{(t-1)}$ and $\alpha^*(t-1)$, respectively.
From Dinkelbach algorithm, we have $\alpha^*(t-1) = \check{\eta}^*(\mathcal{V}^*(t-1); \mathbb{A}^{t-1}) \leq \check{\eta}^*(\mathcal{V}^*)$. As shown in Lemma \ref{lemma.2}, the approximated function provides the global lower bound of the original optimization function, and the results have to be inside the feasible set of the approximate optimization function for the next iteration. Thus, $\check{\eta}^*(\mathcal{V}^*(t-1); \mathbb{A}^{t-1})\leq \check{\eta}(\mathcal{V}^*(t-1); \mathbb{A}^{t}) \leq \check{\eta}^*(\mathcal{V}^{*}(t); \mathbb{A}^{t})$. Therefore, $\alpha^*(t-1) \leq \alpha^*(t)$. Moreover, due to the monotonically decreasing nature of $F(\alpha)$, $F^t(\alpha^*(t-1)) \geq F^t(\alpha^*(t)) = 0$.
\end{proof}
Given Lemma \ref{lemma4}, the initial point in iteration $t$, \textit{i.e., $\alpha^0(t)$}, in Algorithm \ref{al.2} can be set at $\alpha^*(t-1)$ rather than $0$ so that the computation efficiency of the optimization algorithm can be further improved.
\subsection{Sub-problem Decomposition by ADMM}

By now, the UAV computation energy efficiency maximization problem has been transformed into a solvable form. However, solving problem (\ref{eq.dickel}) is time-consuming due to multiple second order cone (SOC) constraints and requires the local information exchange between the UAV and users. Therefore, we propose a distributed solution, in which users maximize their offloaded computation tasks in parallel while the UAV aims to minimize its energy consumption. The original problem is decomposed into several sub-problems {without losing optimality}, and the UAV and users solve the optimization problem cooperatively. 
Local information, such as the mobility of users and the propulsion energy consumption function of the UAV, is not required to be shared among users and the UAV. 

We adopt ADMM technique to decompose problem (\ref{eq.dickel}) \cite{ADMM}. The optimization solution is achieved in an iterative manner. Firstly, we introduce an auxiliary variable, $\mathbb{G}$, which is solved by users:
\[
\mathbb{G}
=
\begin{bmatrix}
    \ddot{\mathbf{Q}}_{1,1} &\dots& \ddot{\mathbf{Q}}_{N,1}& W_{1,1} &\dots& W_{N,1} \\
    \vdots & \ddots &\vdots&\vdots & \ddots &\vdots\\
    \ddot{\mathbf{Q}}_{1,K} &\dots& \ddot{\mathbf{Q}}_{N,K}& W_{1,K} &\dots& W_{N,K} 
\end{bmatrix}^T
\]
where $\ddot{\mathbf{Q}}_{i,k}$ denotes the UAV location in time slot $k$ expected by user $i$. Each user solves a part of the matrix $\mathbb{G}_i=$[$\ddot{\mathbf{Q}}_{i, 1}, W_{i, 1}; \dots; \ddot{\mathbf{Q}}_{i, K}, W_{i, K}$], and updates it to the UAV. 
Then, the UAV generates its trajectory, $\mathbf{Q}$, and overall computation load allocation according to the uploaded matrix $\mathbb{G}$. Denote the overall amount of computation load processed in slot $k$ at UAV by $V_k$, where $\mathbf{V} = [V_1; \dots; V_K]$. The results determined by the UAV are summarized by matrix $\mathbb{H}$, where $\mathbb{H} = [\mathbb{I}_{(N\times 1)}\mathbf{Q};\mathbf{V}]$. $\mathbb{I}_{(N\times 1)}$ is a vector where all $N$ entries are 1.
By the end of the ADMM algorithm, the expected UAV trajectories should be unified and follow the flying constraints. The computation load should be allocated under the UAV computing capability. Thus, in the final optimal solution, the following constraint should be satisfied:
\begin{equation}
\mathbb{P}^T \mathbb{G} = \mathbb{H}, \label{eq.admm1}
\end{equation}
where 
\[
\mathbb{P}
=
\begin{bmatrix}
    \mathbb{I}_{(N \times N)} & \textbf{0}_{(N \times 1)} \\
    \textbf{0}_{(N \times N)} & \boldsymbol{\chi}
\end{bmatrix}
\]
The vector $\boldsymbol{\chi}$ represents the computation intensity for users' tasks, where $\boldsymbol{\chi} =  [\chi_{1};\dots; \chi_{N}]$. 
The sub-matrices $\mathbb{I}_{(N \times N)}$ and $\textbf{0}_{(N \times N)}$ denote $N$-by-$N$ identity matrix and zero matrix, respectively.

In addition, for the non-orthogonal channel model, we introduce another auxiliary variable, $e_{i,k}$, which denotes the summation of $\xi_{j,k}$ in all other users except user $i$.
This variable is used to decouple the correlated $\xi_{j,k}$ in (\ref{eq.c11_no}) to facilitate the independent optimization process at each user. At the end of the optimization, $e_{i,k}$ should be equal to $\sum_{j\in {\mathcal{I}/\{i\}}}\xi_{j,k}$. For simplicity of presentation, we transform this constraint as follows:
\begin{equation}
\frac{1}{N}(e_{i,k}+\xi_{i,k})= \bar{\xi}_{k},\label{eq.admm2}
\end{equation}
where $\bar{\xi}_{k}$ is the mean of $\{\xi_{1,k},\dots, \xi_{N,k}\}$. 
{Then, the augmented Lagrangian function is formulated as follows:
\begin{align}
&{\Gamma(\mathcal{V}_\textrm{A})=-\sum_{k\in \mathcal{K}}\sum_{i\in \mathcal{I}}\check{R}_{i,k} + \alpha \big[\sum_{k\in \mathcal{K}}\sum_{i\in \mathcal{I}}E_{i,k}^{C,U}(\mathbf{W}) +\sum_{k\in \mathcal{K}}\hat{E}_{k}^{F} \big]}\notag \\
&{\hspace{2cm}+ \textrm{Tr}\big\{\mathbf{U}_1^T(\mathbb{P}^T \mathbb{G} - \mathbb{H}) \big\}  + \frac{\rho_1}{2}\normf{\mathbb{P}^T \mathbb{G} - \mathbb{H}}^2}\notag \\
&{\hspace{2cm}+\varpi\sum_{k\in \mathcal{K}}\sum_{i\in \mathcal{I}}\Big\{ U_{2,i,k}[\frac{1}{N}(e_{i,k}+\xi_{i,k})- \bar{\xi}_{k}]} \notag \\
&{\hspace{2cm}+ \frac{\rho_2}{2} [\frac{1}{N}(e_{i,k}+\xi_{i,k})- \bar{\xi}_{k}]^2\Big\}}, \label{eq.al}
\end{align}
where $\normf{\cdot}$ is the notation representing the Frobenius norm. 
Set $\mathcal{V}_\textrm{A}$ represents variables $\{\mathcal{V},\mathbb{G},\mathbb{H},{\mathbf{U}_1},{\mathbf{U}_2}\}$. } 
Variables $\mathbf{U}_1 \in \mathbf{R}^{(N+1)\times K}$ and $\mathbf{U}_2 \in \mathbf{R}^{N\times K}$ are Lagrangian multipliers for the two auxiliary constraints, (\ref{eq.admm1}) and (\ref{eq.admm2}), respectively. Two parameters, $\rho_1$ and $\rho_2$, are penalty parameters. The parameter $\varpi$ indicates the channel model. $\varpi = 1$ denotes the case of the non-orthogonal channel access scheme, and $\varpi = 0$ denotes the case of the orthogonal channel access scheme.

{Problem (\ref{eq.dickel}) can be separated into two sub-problems. The sub-problem solved in user $i$ is organized as follows:
\begin{subequations}
\label{eq:sp1}
\begin{align}
\min_{\mathcal{V}_1}\hspace{0.1cm} & -\!\sum_{k\in \mathcal{K}} \check{R}_{i,k} \!+\! \textrm{Tr}\big\{(\mathbf{U}^{n-1}_{1,i})^T\mathbb{P}_{i}^T \mathbb{G}_{i} \big\}  \!+\!\frac{\rho_1}{2}\normf{\mathbb{P}_i^T \mathbb{G}_{i} \!-\!\mathbb{J}_i^{n-1}}^2 \notag\\ 
&\hspace{0.2cm}+\varpi\Big\{\frac{U_{2,i,k}^{n-1}(e_{i,k})}{N} + \frac{\rho_2}{2} (\frac{e_{i,k}-e_{i,k}^{n-1}}{N}+ \theta_{i,k}^{n-1})^2\notag\\ 
&\hspace{0.2cm}+\sum_{j\in \mathcal{I}/\{i\}}\!\![-\frac{U_{2,j,k}^{n-1}\xi_{i,k}}{N} + \frac{\rho_2}{2} (\frac{\theta_{j,k}^{n-1}}{N-1} + \frac{\xi^{n-1}_{i,k}- \xi_{i,k}}{N})^2]\Big\}\\
  \text{s.t.}\hspace{0.1cm}&   \frac{(\norm{{\ddot{\mathbf{Q}}}_{i.k}-\mathbf{q}_{i,k} }^2 + H^2)n_0}{g_0} \leq l_{i,k}, \forall i, k \\
&   (\ref{eq.c1}),(\ref{eq.c2}),(\ref{eq.c4}),(\ref{eq.c10}), (\ref{eq.c18}), (\ref{eq.c19}), \notag\\
& (\ref{eq.c11})\textrm{, if  }\varpi=0, \notag\\
& (\ref{eq.c22}) \textrm{, if  }\varpi=1, \notag
\end{align}
\end{subequations} 
and the  sub-problem solved in the UAV is organized as follows:
\begin{subequations}
\label{eq:sp2}
\begin{align}
\min_{\mathcal{V}_2} \hspace{0.1cm}& \alpha \big[{\sum_{k\in \mathcal{K}}\!\frac{\kappa  V_k^3}{\Delta^2} \! +\!\!\sum_{k\in \mathcal{K}}\!\hat{E}_{k}^{F} }\big] \!-\! \textrm{Tr}\big\{(\mathbf{U}^{n}_1)^T\mathbb{H}\big\} \!+\! \frac{\rho_1}{2}\normf{\mathbb{P}^T \mathbb{G}^{n} \!-\!\mathbb{H}}^2 \\
  \text{s.t.}\hspace{0.1cm}&     \frac{V_k}{\Delta}\leq f^{U}_{max}, \forall k, \\
&(\ref{eq.c7}),(\ref{eq.c9}),(\ref{eq.c14}), (\ref{eq.c20}),(\ref{eq.c21}).\notag
\end{align}
\end{subequations}}
The term $(x)^{n-1}$ represents the variable $x$ obtained in iteration $n-1$. The Lagrangian multipliers $\mathbf{U}_1$ and $\mathbf{U}_2$ are updated at each iteration as follows:
\begin{subequations}
\begin{align}
&\mathbf{U}_1^n = \mathbf{U}_1^{n-1}+\rho_1(\mathbb{P}^T \mathbb{G}^{n} -\mathbb{H})^{n}\label{eq.admm_n_2}\\
&\mathbf{U}_{2,i,k}^n = \mathbf{U}_{2,i,k}^{n-1} +{\rho_2} \theta_{i,k}^n, \label{eq.admm_n_1}
\end{align}
\end{subequations}
where $\theta_{i,k}^n$ is
\begin{align}
\theta_{i,k}^n= \frac{1}{N}(e_{i,k}^n+\xi_{i,k}^n)-\bar{\xi}_{k}^n. \label{eq.admm_n_3}
\end{align}
$\theta_{i,k}$ represents the difference between the user expected interference and the real interference.
At iteration $n$, problem~(\ref{eq:sp1}) is solved by each user individually. The optimization variable set $\mathcal{V}_1$ includes $\{\delta_{i,k}, {W}_{i,k}, \ddot{\mathbf{Q}}_{i,k}, \xi_{i,k},\mathbf{l},\check{\mathbf{R}},e_{i,k}\}$ for all $k\in \mathcal{K}$. To decompose the auxiliary constraint (\ref{eq.admm1}) for each user $i$, we introduce sub-matrices $\mathbb{P}_i$, $\mathbb{H}_i$, and $\mathbf{U}_{1,i}$, which are defined as follows: The parameter matrix $\mathbb{P}_i$ is the sub-matrix sliced from $\mathbb{P}$, where $\mathbb{P}_i = \mathbf{diag} \{1, \chi_i\} $. 
The matrix $\mathbb{J}_i$ is obtained by the information from the UAV, where $\mathbb{J}_i^n = [\mathbf{Q}^n;\mathbf{V}^n/N+\chi_i{W_i}^n-\sum_{j\in \mathcal{I}} \chi_j\textbf{W}^n_{j}/N]$. 
The sub-matrix $\mathbf{U}_{1,i}$ is sliced from the dual variable, where $\mathbf{U}_{1,i} = [\mathbf{U}_{1}(i,:);\mathbf{U}_{1}(N+1,:)]$.
%The detailed decomposition process is presented in Appendix C.
The detailed decomposition process is omitted due to the space limit.
Subsequently, problem (\ref{eq:sp2}) is solved by the UAV. The optimization variable set $\mathcal{V}_2$ includes $\{\mathbf{Q}, \boldsymbol{\omega},\mathbf{A},\hat{\mathbf{E}}^{F}\}$. 
\begin{lemma}
If the initial value of $\{\mathbf{e}^0,\boldsymbol{\xi}^0, \mathbf{U}^0_1,\mathbf{U}^0_2\}$ is shared and unified among all users and the UAV, only information from the UAV required for computing the sub-problem on the user side at each iteration is $\{\mathbb{J}_i^{n-1}, \boldsymbol{\theta}^{n-1}\}$. 
\end{lemma}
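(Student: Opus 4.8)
The plan is to prove this by induction on the ADMM iteration index $n$, showing that every quantity appearing in the user-side subproblem (\ref{eq:sp1}) other than the local decision variables in $\mathcal{V}_1$ falls into one of three categories: (i) user $i$'s own optimization history, (ii) a dual variable that user $i$ can maintain on its own, or (iii) one of the two received quantities $\mathbb{J}_i^{n-1}$ and $\boldsymbol{\theta}^{n-1}$. First I would enumerate the non-local quantities in the objective of (\ref{eq:sp1}): the proximal center $\mathbb{J}_i^{n-1}$, the dual slice $\mathbf{U}_{1,i}^{n-1}$, the interference multipliers $U_{2,i,k}^{n-1}$ and $U_{2,j,k}^{n-1}$ with $j\neq i$, the previous primal iterates $\xi_{i,k}^{n-1}$ and $e_{i,k}^{n-1}$, and the residuals $\theta_{i,k}^{n-1}$ and $\theta_{j,k}^{n-1}$. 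The previous primal iterates are stored locally by construction (category (i)), and the residuals $\theta_{i,k}^{n-1}$, $\theta_{j,k}^{n-1}$ are exactly the entries of the received vector $\boldsymbol{\theta}^{n-1}$ (category (iii)), so the crux reduces to the two families of dual variables.

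For the interference multipliers I would invoke the recursion (\ref{eq.admm_n_1}), written for a general user as $\mathbf{U}_{2,j,k}^{n}=\mathbf{U}_{2,j,k}^{n-1}+\rho_2\theta_{j,k}^{n}$. Because the shared initialization $\mathbf{U}_2^0$ is unified across users and $\boldsymbol{\theta}^{n-1}$ carries the full residual vector (all users, not just user $i$), each user can advance its own replica of the entire matrix $\mathbf{U}_2$ by one step per iteration using only the freshly received $\boldsymbol{\theta}^{n-1}$. By induction these replicas stay identical and equal to the true $\mathbf{U}_2^{n-1}$, so both $U_{2,i,k}^{n-1}$ and the $U_{2,j,k}^{n-1}$ of the other users are available locally without extra communication (category (ii)).

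The delicate step, and the one I expect to be the main obstacle, is the local maintenance of $\mathbf{U}_{1,i}^{n-1}$, since the global update (\ref{eq.admm_n_2}) is driven by the residual $\mathbb{P}^T\mathbb{G}^n-\mathbb{H}^n$ whose $W$-aggregation row $\sum_{j}\chi_j W_{j,k}^n-V_k^n$ couples all users. Here I would unpack the block structure of $\mathbb{P}$, $\mathbb{G}$, and $\mathbb{H}$ and compute the residual that user $i$ can actually form from its own block $\mathbb{G}_i^n$ and the received center $\mathbb{J}_i^{n}$. Using $\mathbb{P}_i=\mathbf{diag}\{1,\chi_i\}$ and $\mathbb{J}_i^{n}=[\mathbf{Q}^{n};\,\mathbf{V}^{n}/N+\chi_i W_i^{n}-\sum_{j}\chi_j W_j^{n}/N]$, the consensus rows yield $\ddot{\mathbf{Q}}_{i,k}^{n}-\mathbf{Q}_k^{n}$ directly, while the aggregation row collapses to $\tfrac{1}{N}\big(\sum_{j}\chi_j W_{j,k}^{n}-V_k^{n}\big)$, that is, exactly $1/N$ of the coupled global residual. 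The key observation is that this value is a \emph{global} quantity, identical for every user, so each user applies the same increment to its shared aggregation row; together with the unified $\mathbf{U}_1^0$ this keeps all local copies of that row synchronized and equal to the true dual, while the per-user consensus rows are trivially local. Hence $\mathbb{J}_i^{n-1}$ alone supplies everything needed to form the proximal penalty $\normf{\mathbb{P}_i^T\mathbb{G}_i-\mathbb{J}_i^{n-1}}^2$ and to advance $\mathbf{U}_{1,i}$.

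Putting the three categories together closes the induction: under the shared initialization $\{\mathbf{e}^0,\boldsymbol{\xi}^0,\mathbf{U}_1^0,\mathbf{U}_2^0\}$, the only data at iteration $n$ that user $i$ cannot reconstruct locally are $\mathbb{J}_i^{n-1}$ and $\boldsymbol{\theta}^{n-1}$, which is the assertion. (In the orthogonal case $\varpi=0$ the $\boldsymbol{\theta}$-dependent terms vanish and only $\mathbb{J}_i^{n-1}$ is needed, so the non-orthogonal case is the binding one.) The verification I would be most careful about is confirming that the $\mathbb{J}_i$-induced local residual reproduces the global $W$-residual up to a single common constant for all users; any asymmetry there would force additional cross-user exchange and break the claimed communication pattern.
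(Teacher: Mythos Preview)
Your proposal is correct and follows essentially the same inductive approach as the paper's proof: show that the dual variables $\mathbf{U}_{1,i}$ and $\mathbf{U}_2$ can be propagated locally from their recursions once the initial values are shared, so that the only non-reconstructible data at iteration $n$ are $\mathbb{J}_i^{n-1}$ and $\boldsymbol{\theta}^{n-1}$. The paper's proof is much terser, simply asserting that the local update $\mathbf{U}_{1,i}^{n-1}=\mathbf{U}_{1,i}^{n-2}+\rho_1(\mathbb{P}_i^T\mathbb{G}_i^{n-1}-\mathbb{J}_i^{n-1})$ requires only $\mathbb{J}_i^{n-1}$, the stored $\mathbb{G}_i^{n-1}$, and the previous $\mathbf{U}_{1,i}^{n-2}$, and ``similarly'' for $\mathbf{U}_2$; it does not unpack the block structure of $\mathbb{J}_i$ or analyze the aggregation-row residual at all.

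One small over-claim to watch: when you compute the aggregation row of $\mathbb{P}_i^T\mathbb{G}_i-\mathbb{J}_i$ you correctly obtain $\tfrac{1}{N}\big(\sum_j\chi_jW_{j,k}-V_k\big)$, but you then say this keeps the local copies ``synchronized and equal to the true dual.'' They are synchronized across users (same increment for all $i$), but they are \emph{not} equal to the global $\mathbf{U}_1(N{+}1,:)$ from (\ref{eq.admm_n_2}), which increments by the unscaled residual $\sum_j\chi_jW_{j,k}-V_k$; the local version carries a $1/N$ scaling. This does not damage the lemma, since the user sub-problem only requires the locally maintained $\mathbf{U}_{1,i}$ as defined by Algorithm~\ref{al.3}, not the global $\mathbf{U}_1$ itself; but drop the phrase ``equal to the true dual'' to keep the argument clean.
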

\begin{proof}
If the initial value is unified among the UAV and users, the dual variables are not required to be shared and can be computed locally by the UAV and users. For computing the dual variable $\mathbf{U}_{1,i}$ at $n$, the following knowledge is required: 
the updated global value $\mathbb{J}_i^{n-1}$, the historical value for the local information $\mathbb{G}_i^{n-1}$, and the historical value of the dual variable $\mathbf{U}_{1,i}^{n-1}$. 
%Although $\mathbf{U}_{1,i}^{n-1}$ is unknown on the user side, $\mathbf{U}_{1,i}^{n-1}$  is related to the memory in $n-2$ iteration. 
Therefore, if $\mathbf{U}_{1,i}^{0}$ is identical to all users and the UAV, $\mathbf{U}_{1,i}^{n}$ can be synchronized according to the historical value and the value from the global variable. Similarly,
% each users requires the dual variable $U_{2,i,k}$ for all users and all slots. However, $U_{2,i,k}$ only depends on its history value and $\theta_{i,k}$, and $\boldsymbol{\theta}$ is broadcasted to all users. Thus, 
$\mathbf{U}_{2}$ can be updated by users if the initial value is known.
\end{proof}

\begin{algorithm}
 
\caption{ADMM Algorithm for Solving Problem (\ref{eq.dickel})} 
\label{al.3}
\begin{algorithmic}[1]
\State {Initialize variables $\{\mathbf{e}^0,\boldsymbol{\xi}^0,\boldsymbol{\theta}^0, \mathbb{H}^0, \mathbb{G}^0\}$ and dual variables $\{\mathbf{U}^0_1,\mathbf{U}^0_2\}$. Loop index $n = 0$.}
\Repeat
\State \textbf{{For each user}} $\mathbf{i}$:
\parState 
{\textit{If $\varpi = 0$}: Wait until receive updated $\mathbb{J}_i^{n-1}$. }
\parState{\textit{If $\varpi = 1$}:  Wait until receive updated $\{\mathbb{J}_i^{n-1}, \boldsymbol{\theta}^{n-1}\}$. }
\parState{Calculate the dual variable $\mathbf{U}_{1,i}^{n-1} = \mathbf{U}_{1,i}^{n-2}+\rho_1(\mathbb{P}_i^T \mathbb{G}_i^{n-1} -\mathbb{J}_i^{n-1})$.}
\State Calculate the dual variable $\mathbf{U}_{2}$ for all $i\in \mathcal{I}$ by (\ref{eq.admm_n_1}).
\State Solve problem (\ref{eq:sp1}).
\parState 
{\textit{If $\varpi = 0$}: Send $\mathbb{G}_{i}^{n}$ to the cloudlet.}
\parState {\textit{\textit{If $\varpi = 1$}}: Send $\{\mathbb{G}_{i}^{n},\mathbf{e}_i^{n},\boldsymbol{\xi}_i^{n}\}$ to the cloudlet.}
\State \textbf{{For the UAV-mounted cloudlet}}:
\State Gather information from users to form matrix $\mathbb{G}^{n}$.
\State Solve problem (\ref{eq:sp2}), and update $\mathbb{H}^n$.
\parState {Update dual variable $\mathbf{U}^n_{1}$ by (\ref{eq.admm_n_2}) }
\parState {\textit{If $\varpi=1$}: Update variables $\theta_{i,k}^{n} \forall i,k$ by (\ref{eq.admm_n_3}), and send the variables to users.}
\State $n = n+1$.
\Until{$|\Gamma^n(\mathcal{V},\mathbb{G},\mathbf{V},{\mathbf{U}_1},{\mathbf{U}_2}) -\Gamma^{n-1}(\mathcal{V},\mathbb{G},\mathbf{V},{\mathbf{U}_1},{\mathbf{U}_2})| \leq \theta_3$. }
\end{algorithmic}
\end{algorithm}

Consider the condition in Lemma 4, the distributed algorithm is given in Algorithm \ref{al.3}. In each optimization iteration, user side computes and share matrix $\mathbb{G}$ to the UAV, and UAV computes and shares the matrix $\mathbb{J}$ to users. Meanwhile, when $\varpi = 1$, excepting contributing matrix $\mathbb{G}_i$, user $i$ needs the information $e_{j,k}$ and $\xi_{j,k}$ from other users $j\in \mathcal{I}/\{i\}$ to evaluate the interference. 
%An illustration of the algorithm is shown in Fig. \ref{admm}. 
%The matrix $\mathbb{H}$ is handled by the UAV, while user side computes matrix $\mathbb{G}$.  All users are required to send their local variables $e_{i,k}$ and $\xi_{i,k}$ to the UAV in each cycle, and the UAV broadcasts $\{\theta_{i,k}, \forall i,k\}$ which provides enough information for the users to evaluate the interference.

By the problem decomposition, at the user side, each user only aims to maximize its own offloading data given the UAV trajectory computed by the UAV-mounted cloudlet and the interference environment in the previous iteration. At the UAV-mounted cloudlet side, the UAV aims to minimize energy consumption under the users' expected UAV trajectories to collect enough workload. The trade-off between the received offloaded tasks and the energy consumption is controlled by the parameter $\alpha$ which is updated out of the ADMM algorithm loop.
%, to achieve the best energy efficiency in the operation. 
Meanwhile, the corresponding variables and constraints are split into two groups. This introduces three main advantages. Firstly, local variables and parameters, such as user location and user offloading constraints, are not required to be uploaded to the UAV. {Similarly, UAV's mechanical parameters and settings are not required to be shared to users for offloading optimization.} Secondly, less configuration is required when the UAV is replaced. Thirdly, the main computation load in solving the problem is from the SOC programming. The SOC constraints are now decomposed and solved by users in parallel such that the computation efficiency can be improved. For ADMM algorithm, in the orthogonal channel model, there are two main distributed blocks: the user side and the UAV side. The convergence of ADMM is guaranteed when the number of blocks is no more than two. In the non-orthogonal channel model, since each user is required to compute the interference variable $e_{i,k}$ parallelly, convergence is not always guaranteed. Proximal Jacobian ADMM can be adopted to ensure the convergence, in which the proximal term  $\frac{\tau}{2} ||x_i-x_i^k||^2$ is further combined in the primal problem of the current algorithm \cite{pjadmm}.

\subsection{Convergence and Complexity Analysis}
{The convergence for the three loops in Algorithms 1 to 3 is guaranteed. For the SCA-based algorithm, if the problem is feasible and the initial values of the approximate variables are in the feasible set of the original optimization problem (\ref{eq:optProb2}), the algorithm convergence is ensured \cite{Boyd}. Moreover, the Dinkelbach algorithm can achieve the optimal $\alpha^*$ with a super-linear rate. }

{The computation complexity of the problem is dominated by the SOC programming \cite{Hu,Wang3}. Suppose that Algorithm~\ref{al.3} runs $L_1\times L_2$ iterations, where the SCA algorithm loop repeats $L_1$ times, and the loop for the Dinkelbach algorithm repeats $L_2$ times. The problem before decomposition, \textit{i.e.}, problem (\ref{eq.dickel}), has $KN$ SOC constraints in 4 dimensions, $K$ SOC constraints in 7 dimensions, and $KN$ SOC constraints in 2 dimensions, where $6KN+4K$ variables participates in those constraints. The overall complexity can be $L_1L_2O\big(\sqrt{2KN+K}(6KN+4K)(20KN+49K+(6KN+4K)^2)\big)$. After ADMM decomposition, for the sub-problem on the user side, there are $K$ SOC constraints in 4 dimensions and $K$ SOC constraints in 2 dimensions. Thus, the computation complexity is $L_1L_2O(1/\theta_3)O\big(\sqrt{2K}(5K)(20K+(5K)^2)\big)$ for each user. On the UAV side, the sub-problem contains $K$ SOC constraints in 7 dimensions. The complexity is $L_1L_2O(1/\theta_3)O\big(\sqrt{K}(2K)(49K+(2K)^2)\big)$.}

%the UAV trajectory calculated by users, $\mathbb{Q}_{i,k}$, and the 

\section{Proactive Trajectory Design Based on Spatial Distribution Estimation}
\label{sec6}
So far, we have introduced the trajectory design and resource allocation for the scenario that all computation load information and user location are known. {However, some IoT nodes have a certain mobility \cite{Hakiri}.} It is hard for users to know their future positions during the upcoming computation cycle. Moreover, users needs to send the offloading requests at the beginning of the cycle. It means that the user may buffer the computation task until a new cycle begins, which introduces extra delay for waiting to send the request. {Thus, the maximum queue delay may reach to $T$ seconds.}
%Moreover, due to the optimality, our the proposed algorithm is difficult to coordinate the system with the large number of users. 
To deal with the above issues, in this subsection, we introduce an approach to estimate the spatial distribution of user locations in a cycle. The mobility of users is predicted by an unsupervised learning tool, kernel density estimation method \cite{KDE}, and the computation load of each user is considered in a stochastic model correspondingly. The UAV trajectory is optimized via the estimated knowledge about ground users. {Thus, UAV can collect the offloaded tasks of users without requesting in advance.}

To estimate the location of users, each user need to report its current location periodically. The sampled location of user $i$ is represented by $q_i$. We use the sampled location to estimate the spatial distribution of users for the cycle, where the probability density function for the user at $(x,y)$ is denoted as $f(x,y)$. 

In order to compute $f(x,y)$, consider a small region $\textit{R}$ which is a rectangle area with side length of $h_x$ and $h_y$, \textit{i.e.}, Parzen window. 
%The values of $h_x$ and $h_y$ are also considered as the bandwidth of the estimation. 
To count the number of users falling within the region, we define the following function to indicate if user $i$ is in the area:
\begin{equation}
C(q^x_i,q^y_i;\textit{R}) = \left\{
                \begin{array}{ll}
                  1, \textrm{if } \max\{\frac{||q^x_i-x||}{h_x},\frac{||q^y_i-y||}{h_y}\}\leq \frac{1}{2}\\
                  0, \textrm{otherwise,}
                \end{array}
              \right.
\end{equation}
where $(x,y)$ is the central point of the area. Thus, for a large $N$, the general expression for non-parametric density estimation is \cite{KDE}
 \begin{equation}
 f(x,y) = \frac{1}{Nh_xh_y} \sum_{i\in\mathcal{I}} C(q^x_i,q^y_i;\textit{R}).
 \end{equation}
To establish continuous estimation function, a smooth Gaussian kernel is applied, where
 \begin{equation}
 \hat{f}(x,y) = \frac{1}{N\sqrt{h_xh_y}} \sum_{i\in\mathcal{I}} \frac{1}{2\pi} e^{-[\frac{(q^x_i-x)^2}{2h_x}+\frac{(q^y_i-y)^2}{2h_y}]}. \label{eq.ker}
 \end{equation}
The term $\hat{f}(x,y)$ is the distribution of Gaussian kernel estimation. In (\ref{eq.ker}), $h_x$ and $h_y$ represent the bandwidth of the Gaussian kernel rather than the side length of the Parzen window.
To improve the estimation quality, the proper bandwidth, $h_x$ and $h_y$, needs to be selected to minimize the error between the estimated density and the true density. In this work, the maximum likelihood cross-validation method \cite{KDE,Mozaffari} is adopted to determine the bandwidth $h_x$ and $h_y$. The optimal bandwidth is 
\begin{equation}
[h_x^*,h_y^*] = \textrm{argmax}\{\frac{1}{N} \sum_{i\in \mathcal{I}} \log \hat{f}_{-i}(q_i^x,q_i^y)\},
\end{equation} 
where $\hat{f}_{-i}(q_i^x,q_i^y)$ is the estimated distribution {in which user $i$ is left out of the estimation}. 
%An example of the estimation is shown in Fig. \ref{fig.dist}, where the size of operation area $\mathcal{A}$ is set as 500m$\times$ 500m. As shown in Fig. \ref{fig.dist}(a), a hundred users are located in the area with a certain pattern. The estimated distribution results are shown in Fig. \ref{fig.dist}(b), where the optimal bandwidth is [270;318].
In order to apply the estimated distribution into our proposed approach, 
%we combine the estimated distribution results with our proposed algorithm. For ease to compute, 
we divide the working area of the UAV $\mathcal{A}$ into $G\times G$ sub-areas. For each sub-area $\mathcal{A}_i$, there is a virtual user located at the center of the area. The virtual user carries all the computation tasks in the sub-area. It is assumed that the distribution of the task input data size and user spatial location are independent. The expected length of input bits for the tasks generated by a user by $\mathbb{E}[X]$. Thus, the expected length of computing bits generated inside the sub-area $\mathcal{A}_i$ is
\begin{equation}
\mathbb{E}[I_i] = \mathbb{E}[X]\mathbb{E}[N_i] = \mathbb{E}[X] \int_{(x,y)\in \mathcal{A}_i} \hat{f}(x,y) dxdy,
\end{equation}
where $\mathbb{E}[N_i] $ denotes the expected number of users in the sub-area $\mathcal{A}_i$.  Our proposed approach can now be adopted to solve the problem: In the new problem, there are $G^2$ virtual users participating in the computation task offloading, and virtual user $i$ has $\mathbb{E}[I_i]$ computation load to be done in a cycle.  The location of user $i$ is fixed at the center of the sub-area. For the orthogonal channel model, the virtual user $i$ shares a portion of $\mathbb{E}[N_i]/N$ of the channel bandwidth. As $G$ increases, the performance of the estimation will be improved correspondingly.

\section{Numerical Results}
\label{sec7}
In this section, {we evaluate the performance of our proposed optimization approach.} The parameter settings are given in Table \ref{t1}. 
The channel gain parameter $g_0$ is -70 dB. Let the term $p$ represent the percentage of computation tasks that have to be offloaded to the cloudlet, \textit{i.e.}, $p$ = $(\check{I}_i/I_i)*100\%$. We consider that users have homogeneous offloading requirements in the simulation, \textit{i.e.}, $E_i^T$ and $p$ are identical for all user. The term ``NO" represents the non-orthogonal channel access scheme, and the term ``O" represents the orthogonal channel access scheme. We also consider the circular trajectory scheme as the benchmark, where the UAV moves around a circle within a cycle, with the circle center located at (0.5,0.5) km, and the radius is predefined.
{Two network scenarios are considered: a three-node scenario and a four-node scenario. In the three-node scenario, there are three users located at (0,1) km, (1,1) km, and (1,0) km, as shown in Fig. \ref{fig.nr1}(a). At the beginning of the cycle, the UAV moves from the location (0,0) at an initial speed  (-10,0) m/s. By the end of the cycle, the UAV returns to the final designated position at (0.5,0) km. 
In the four-node scenario, there are four users located at the randomly generated locations. The users travel at constant speeds which are random selected from [-3,-3] m/s to [3,3] m/s, as shown in Fig. \ref{fig.nr1}(b). The UAV moves from the location (200,200) m at an initial speed (-10,0) m/s and returns to the initial position at the end of the cycle.
}

\begin{table}[t]  
\centering
\caption{Parameter Settings for the Three-node Scenario}
\begin{tabular}{cc|cc}
\hline
Parameter & Value       & Parameter  & Value       \\ \hline
$B$       & 3 MHz      & $\kappa$   & $10^{-28}$  \\
$\sigma^2$  &-80 dBm/Hz &  $\gamma_1$ & 0.0037      \\
$\chi_i$  & 1550.7      & $\gamma_2$ & 500.206      \\
$\Delta$  & 1.5 s       & $H$        & 100 m         \\
$a_{\textrm{max}}$ & 50 m/s$^2$      & $P$        & 100 mW          \\
$v_{max}$ & 35 m/s      & $K$        & 50 \\ \hline
\end{tabular}
\label{t1}
\end{table} 
\begin{figure}[t]  
 \centering  
 \hspace*{-0.5cm} 
  \includegraphics[width=96mm]{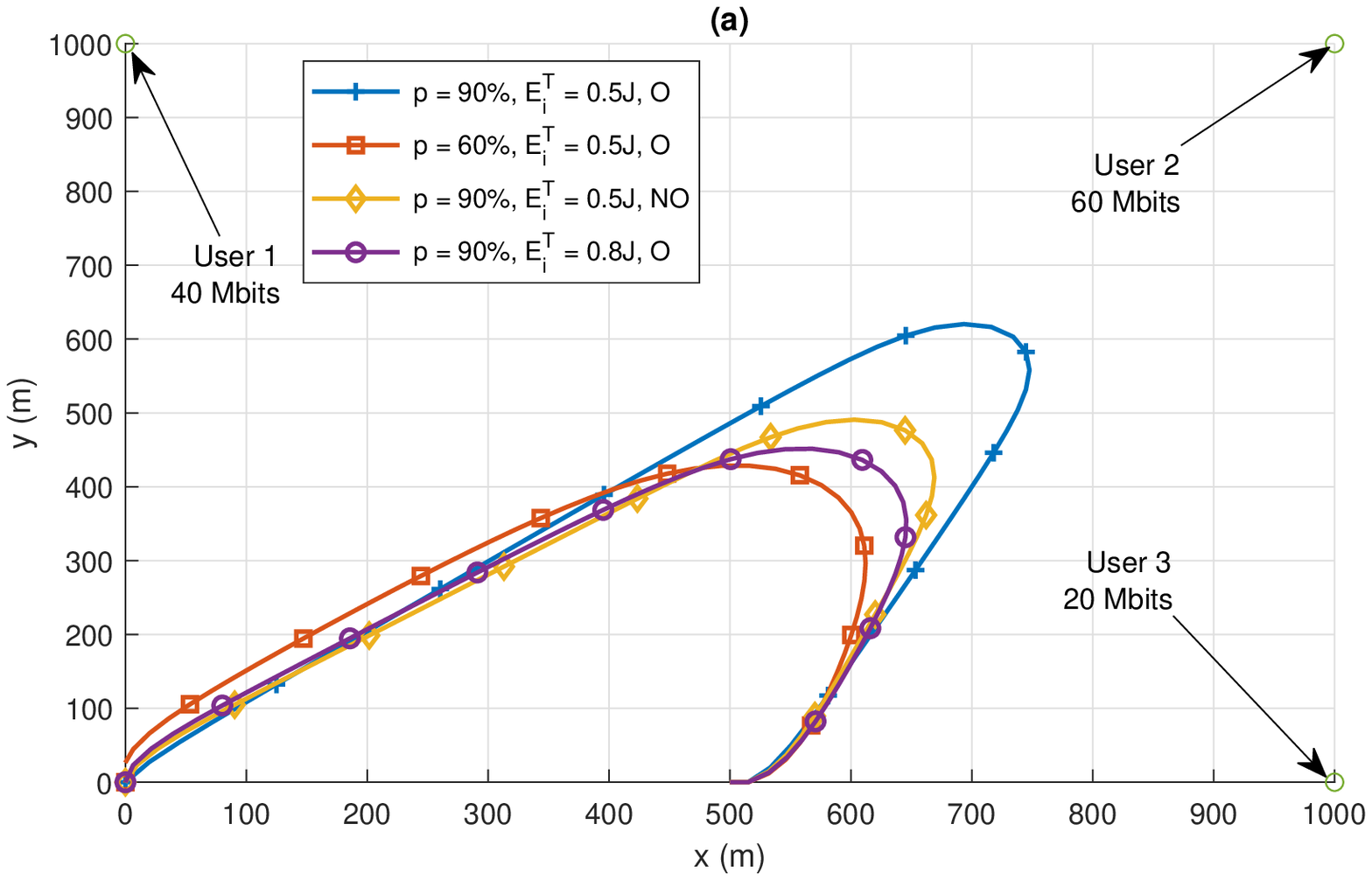}\\  
   \hspace*{-0.5cm}
   \includegraphics[width=96mm]{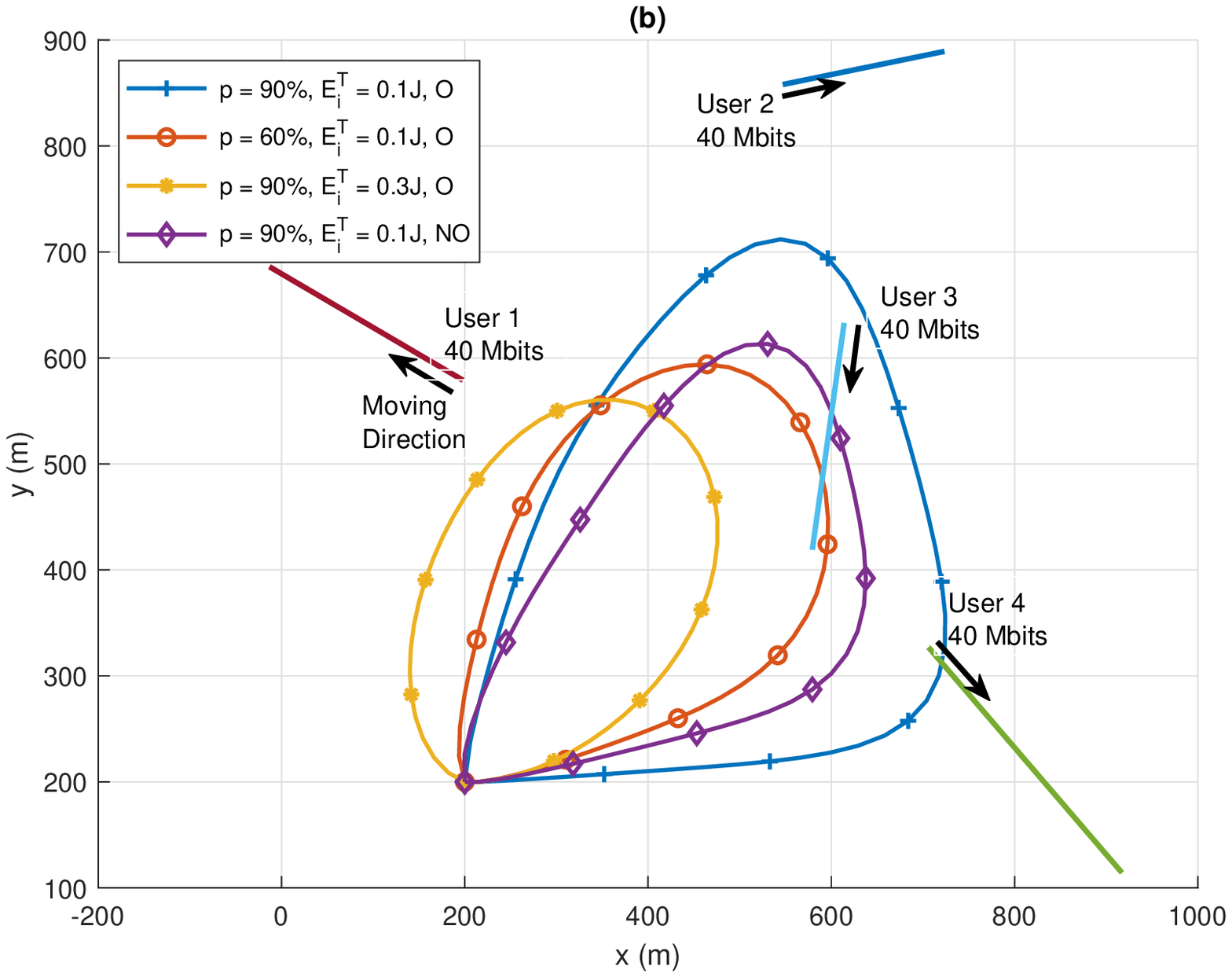}\\  
    \vspace*{-0.1cm}
  \caption{{Optimal UAV trajectories with different parameter settings: (a) the three-node scenario; (b) the four-node scenario with user mobility, where the solid straight lines represent user trajectories, and the arrows represent user moving directions.}}   \label{fig.nr1}
\end{figure}

{The UAV trajectory results obtained by the proposed approach are shown in Fig. \ref{fig.nr1}.} In the three-node case shown in Fig. \ref{fig.nr1}(a), the UAV takes most of the time moving towards and stays around the location of user 2 due to high computation task loads of the user. With a higher minimum offloading requirement $p$, the UAV moves closer to users in order to collect more offloading tasks. Similarly, with a lower maximum communication energy requirement $E_i^T$, the UAV also moves closer to users to reduce the user's offloading communication energy consumption. Moreover, since the non-orthogonal access method has a higher channel capacity, under the same condition, the trajectory of the non-orthogonal case is shrunk to preserve the mechanical energy consumption compared to the orthogonal channel case. {Similar results can be obtained in the four-node case, as shown in Fig. \ref{fig.nr1}(b).}

\begin{figure}[t]  
 \centering  
  \includegraphics[width=78mm]{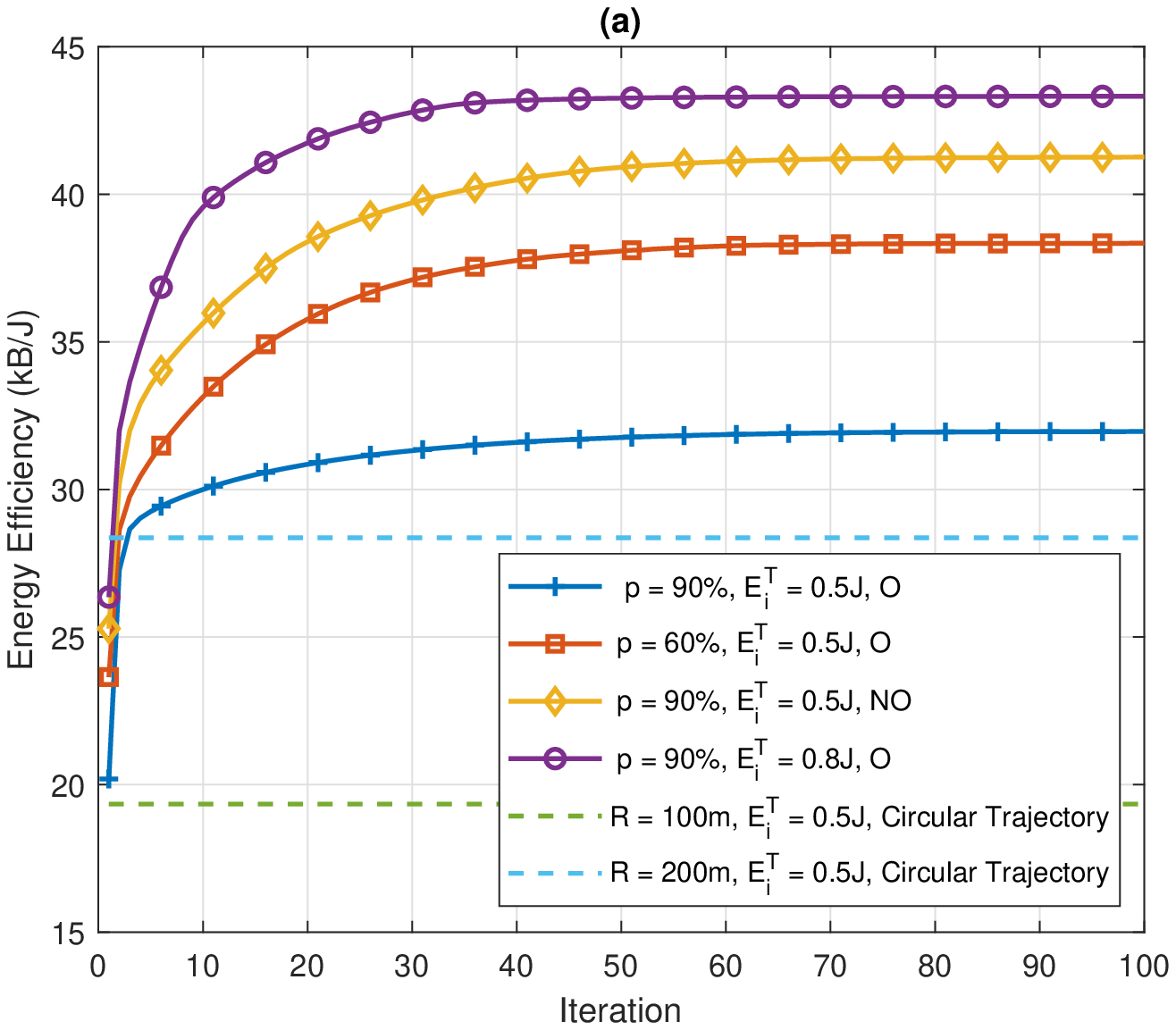}\\  
   \includegraphics[width=78mm]{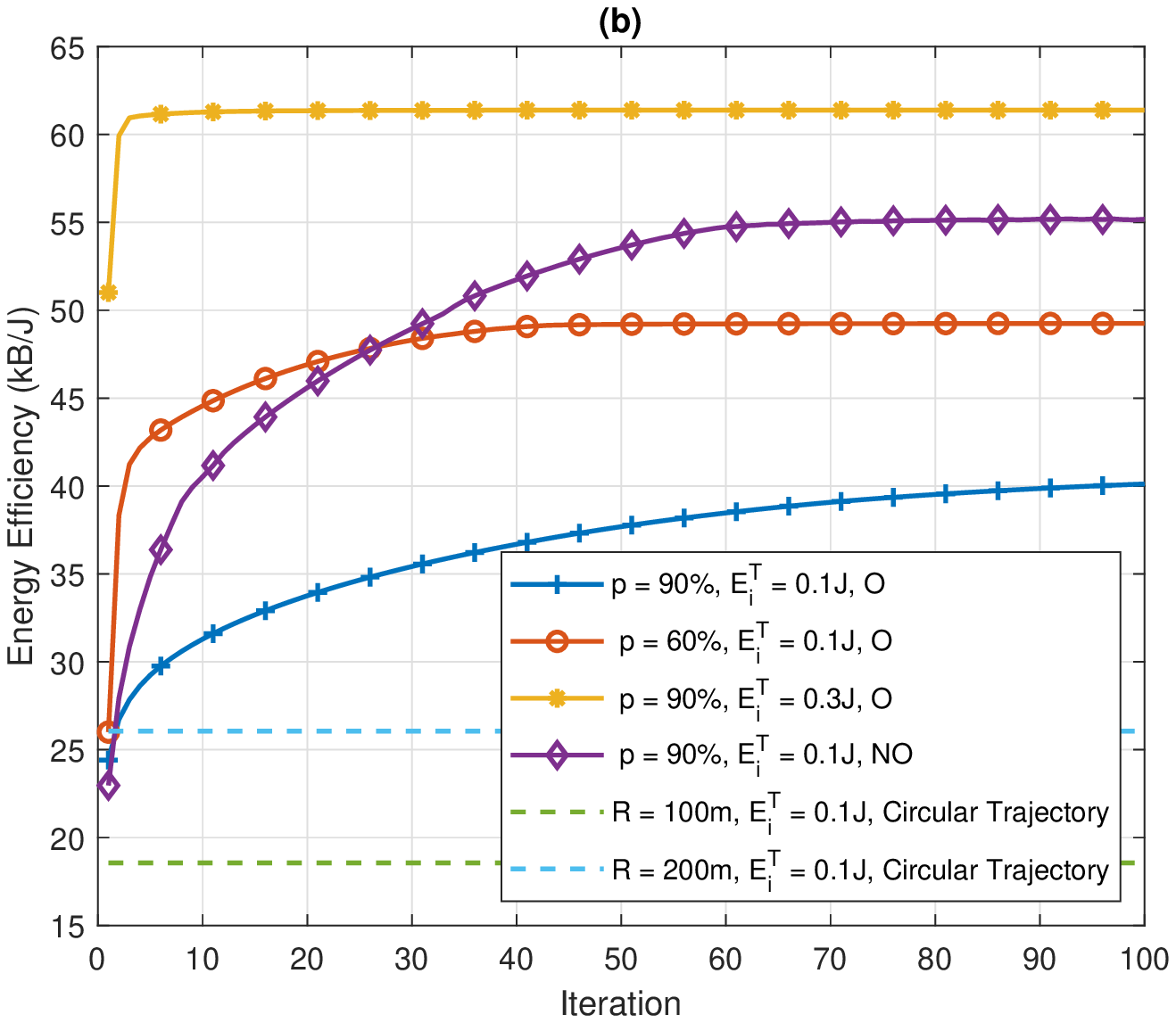}\\  
    \vspace*{-0.1cm}
  \caption{{Energy efficiency versus main loop iteration number with different trajectory designs: (a) the three-node scenario; (b) the four-node scenario with user mobility.} } \label{fig.nr3}
\end{figure}

{The comparisons of the energy efficiency with different settings are shown in Fig. \ref{fig.nr3}. In Figs. \ref{fig.nr3}(a) and \ref{fig.nr3}(b), the x-axis represents the iteration number of the SCA-based algorithm loop. As shown in Fig. \ref{fig.nr3}(a), the energy efficiency converges at $t = 30$ in the three-node scenario, while the number of iterations till convergence is increased in the four-node scenario. Moreover, for both scenarios, with loose user offloading requirements, the energy efficiency is improved due to the expanded optimization feasible set. In contrast, with tight user offloading requirements, the energy efficiency is decreased significantly due to high energy consumption for the UAV to move closer to the users. }

\begin{table}[t]  
\caption{Parameter Setting for Fig. \ref{fig.nr6}}
\centering
\label{t2}
\scriptsize{
\begin{tabular}{ccc|ccc|ccc}
\hline
Index & $p$  & $E_i^T$ & Index & $p$  & $E_i^T$ & Index & Radius & $E_i^T$ \\ \hline
1     & 90\% & 0.5 J   & 4     & 60\% & 0.5 J   & 7     & 200 m  & 0.5 J   \\ \hline
2     & 90\% & 0.8 J   & 5     & 60\% & 0.8 J   & 8     & 200 m  & 0.8 J   \\ \hline
3     & 90\% & 1.1 J   & 6     & 60\% & 1.1 J   & 9     & 200 m  & 1.1 J   \\ \hline
\end{tabular}}
\end{table}

\begin{figure}[t]  
 \centering   
  \includegraphics[width=78mm]{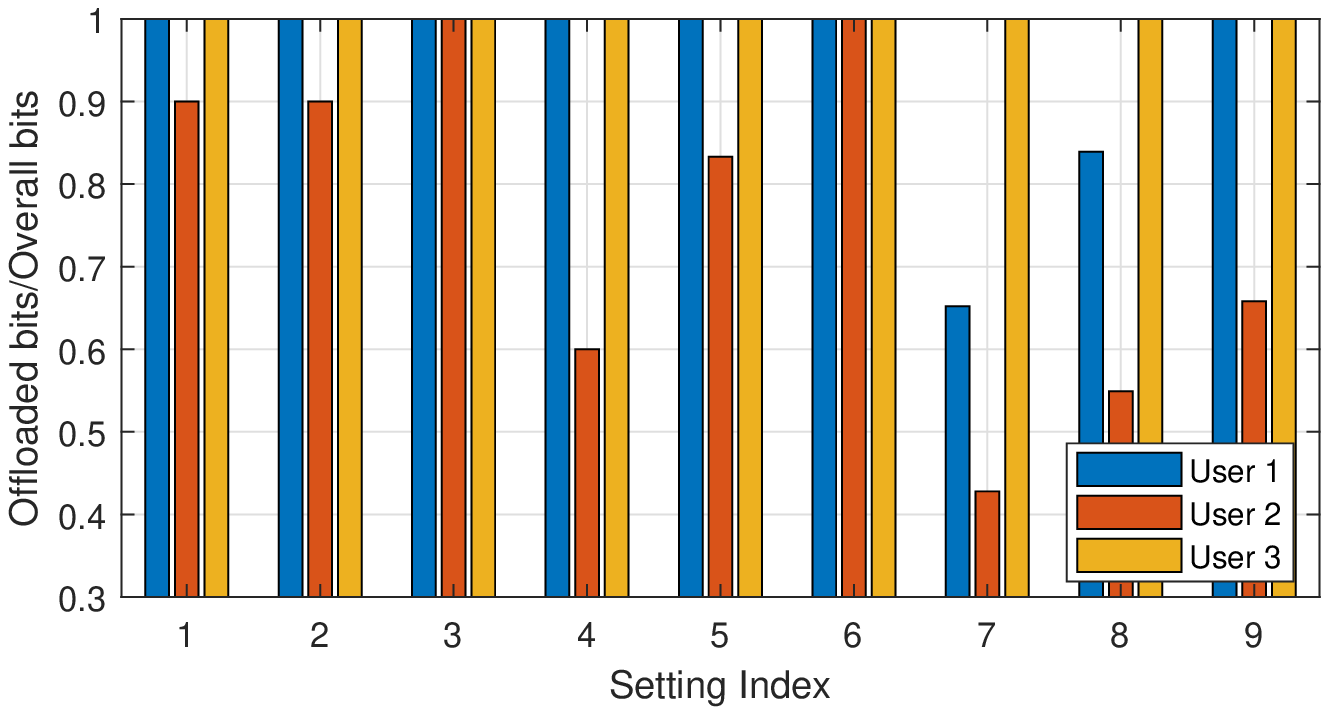}\\  
    \vspace*{-0.1cm}
  \caption{The ratio between the offloaded task data amount and the overall computation task data among generated by users with different parameter settings. } \label{fig.nr6} 
\end{figure} 

For the three-node case, the ratio between the offloaded data amount and the overall computing data amount is shown in Fig. \ref{fig.nr6}. The parameter setting for the indexes are given in Table \ref{t2}, where the results by the proposed approach are shown in 1-6, and the results by the circular trajectory are shown in 7-9. For all scenarios, the proposed approach can achieve the minimum offloading requirement, while the circular trajectory scheme cannot guarantee to achieve the requirement. Moreover, when the maximum communication energy requirement $E_i^T$ is increased, the UAV can collect more data even though its trajectory is far away from users compared to the case with a low $E_i^T$. 
%It shows that under loose user requirements, the UAV will find the path with consuming less energy to improve the energy efficiency in the operation. 
The UAV also collects the extra offloaded tasks, which is beyond the users' requirement, to improve its energy efficiency. 

\begin{figure}[t]  
 \centering   
   \vspace*{-0.3cm}
  \includegraphics[width=78mm]{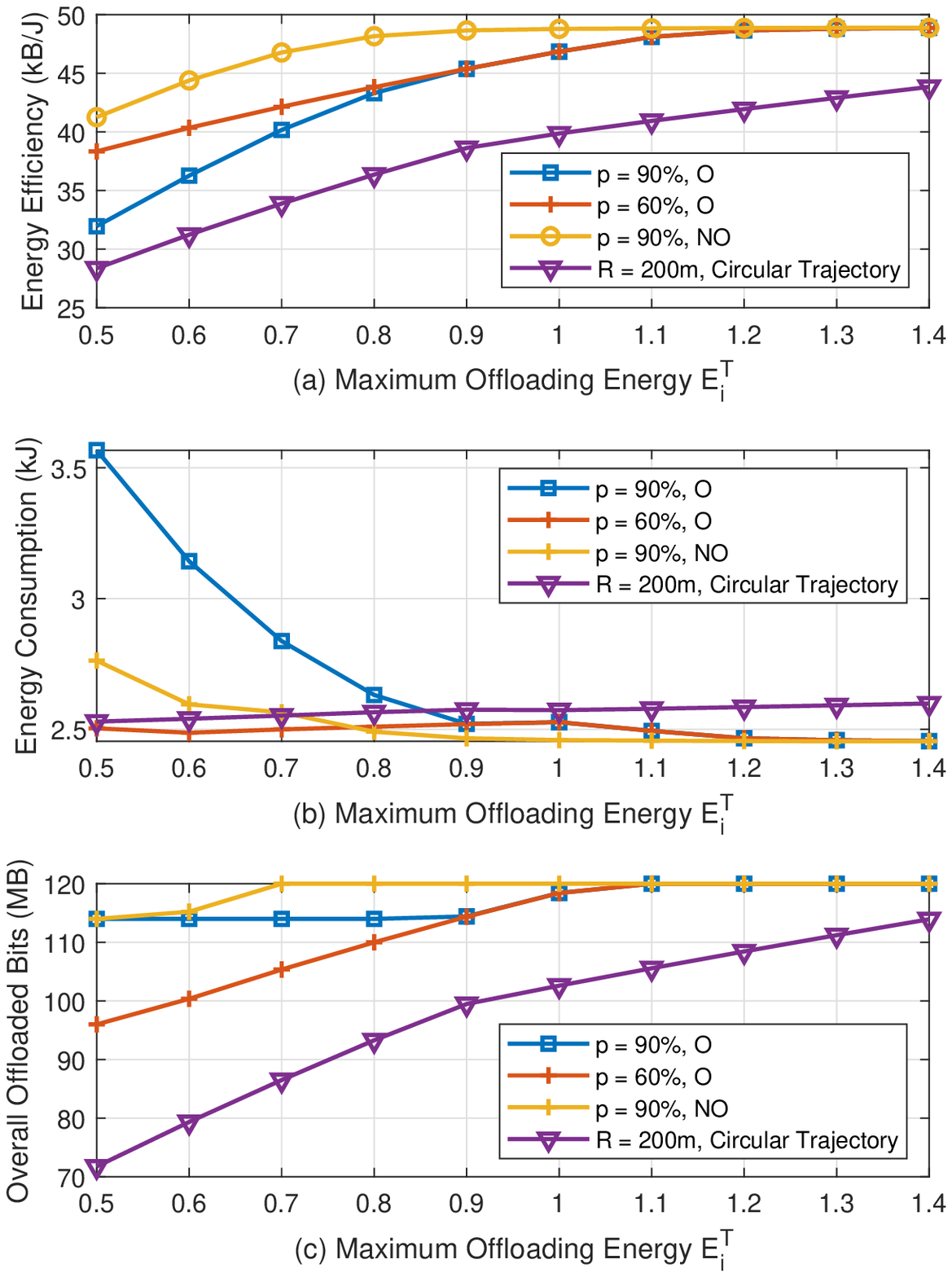}\\  
    \vspace*{-0.3cm}
  \caption{(a) Energy efficiency versus the maximum offloading communication energy with different settings. (b) Overall energy consumption in a cycle versus the maximum offloading communication energy. (c) Overall offloaded bits in a cycle versus the maximum offloading communication energy.} 
  \label{fig.nr9} 
\end{figure} 

The trade-off between the maximum offloading energy, \textit{i.e.}, $E_i^T$, and the energy efficiency in the three-node case is shown in Fig. \ref{fig.nr9}(a). As $E_i^T$ increases, the energy efficiency of the UAV is increased at first and hits the ceiling in a high $E_i^T$. 
At that point, $E_i^T$ is not the factor that limits the energy efficiency performance since all user's computing data is collected as shown in Fig. \ref{fig.nr9}(c). 
When the energy efficiency reaches the maximum value, the UAV will find a path that has minimum energy consumption given that all tasks are offloaded. Furthermore, our proposed approach can improve the energy efficiency significantly compared to the circular trajectory.

The magnitudes of the UAV acceleration and velocity in the three-node case are shown in Fig. \ref{fig.nr2}(a) and Fig. \ref{fig.nr2}(b), respectively. The final velocity is constrained to be equal to the initial velocity.
Note that the optimal velocity cannot be zero due to the characteristic of fixed-wing UAV.
With the lower maximum energy requirement, both magnitudes of acceleration and velocity are increased, such that the UAV can move closer to users. 
%For some time slots, the magnitude of acceleration reaches to the maximum acceleration bound. 
With the higher energy requirement, the fluctuation on velocity and acceleration decreases to reduce the propulsion energy consumption of the UAV.

\begin{figure}[t]  
 \centering   
   \vspace*{-0.3cm}
  \includegraphics[width=78mm]{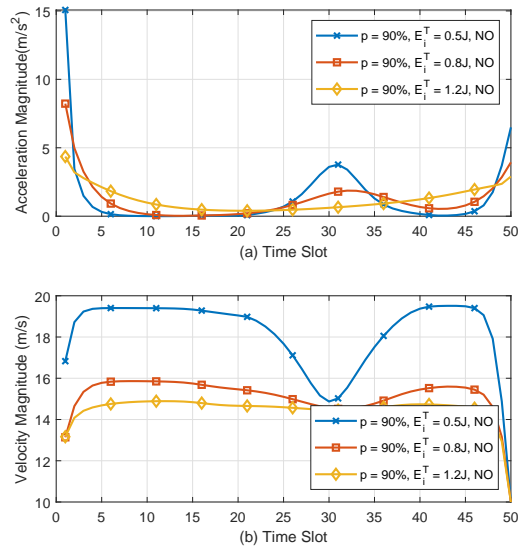}\\  
    \vspace*{-0.3cm}
  \caption{(a) The acceleration of the UAV in the cycle. (b) The speed of the UAV in the cycle.}  \label{fig.nr2}
\end{figure} 

The ratio of the actual allocated transmit power to the maximum power, $\delta_{i,k}$, for the three users in a cycle is shown in Fig.~ \ref{fig.nr4}(a). Note that the overall offloading communication energy is limited. For the user with high offloading demands, \textit{i.e.}, user 2, the ratio is maximized when the UAV moves adjacent to it, while the ratio is minimized when the UAV moves away from it. The user tends to preserve the communication energy and starts the offloading only when the data rate is high. However, for user 3, the transmit power is still allocated when the UAV is far away from the location of the user for two reasons: Firstly, the maximum communication energy of the user allows user uploading the data even though the user transmission efficiency is low. Secondly, the UAV-mounted cloudlet prefers collecting the data in advance such that it can balance the computation load to reduce the computing energy cost. 
%Moreover, for the non-orthogonal channel scenario, almost one user is uploading their computation load for each time slot to reduce the interference thereby achieve the offloading requirement with the limited energy constraint. 
The computation load allocation of the cloudlet in the three-node case is shown in Fig. \ref{fig.nr4}(b). Since the energy consumption is cubically increased as the computation load in a unit time increased (based on (\ref{eq.c6}) and (\ref{eq.c60})), the computation load is preferred to be balanced among time slots. However, the computation load can only be executed after the corresponding tasks are offloaded into the cloudlet. Therefore, in the case with limited maximum communication energy, the allocated computation load is increased only when the new offloaded tasks are received. In contrast, with the loose maximum communication energy constraint, the workload fluctuation is reduced significantly to minimize the computing energy consumption. 

\begin{figure}[t]  
 \centering   
   \vspace*{-0.5cm}
  \includegraphics[width=78mm]{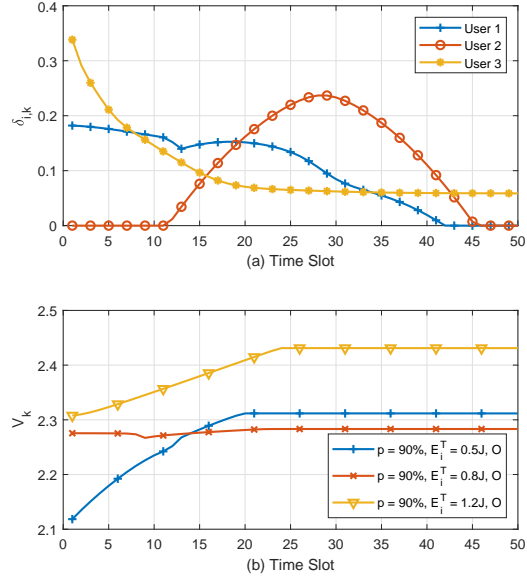}\\  
    \vspace*{-0.3cm}
  \caption{(a) The transmit power allocation among three users, where $p$ = 90\%, and $E_i^T$ = 0.5 J under orthogonal channel scenario. (b) The workload allocation with different settings.}   \label{fig.nr4}
\end{figure}

\section{Conclusions}
\label{sec8}
In this paper, an optimization approach has been proposed to maximize the energy efficiency of a UAV-assisted MEC system, where the UAV trajectory design and resource allocation have been jointly considered. The non-convex and non-linear energy efficiency maximization problem has been solved in a distributed manner. Moreover, the node mobility estimation has been adopted to design a proactive UAV trajectory when the knowledge of user trajectory is limited. %Simulation results have been provided to validate the effectiveness of the proposed algorithm. 
Our work can offer valuable insights on UAV optimal trajectory design for providing on-demand edge computing service for remote IoT nodes. In the future, considering the uncertainty of user mobility and the time-invariant computation demand, we will focus on the online resource management in UAV-assisted MEC system under a dynamic channel environment.

 \section*{Appendix A: Proof of Lemma~\ref{lemma.1}}
Firstly, to deal with the non-convex function on the numerator,  \textit{i.e.}, $R_{i,k}(\delta_{i,k},\mathbf{Q}_k)$, we introduce the auxiliary variable $\check{R}_{i,k}$ to indicate the lower bound of the data rate for user $i$ in slot $k$.   Moreover, we introduce two auxiliary variables: the term $\xi_{i,k}$, where $\xi_{i,k} \leq \delta_{i,k}P/l_{i,k}$, and the term $l_{i,k}$, where $l_{i,k} \geq N_0/h_{i,k}$. Thus, the following relation can be established
\begin{equation}
\check{R}_{i,k} \leq \frac{B \Delta}{N} \log(1+\xi_{i,k}) \leq R_{i,k}(\delta_{i,k},\mathbf{Q}_k),
\end{equation}
where $\check{R}_{i,k}$ is the epigraph form of $R_{i,k}(\delta_{i,k},\mathbf{Q}_k)$. 
When (\ref{obj:optProb3}) is maximized, \textit{i.e.}, the numerator $\check{R}^*_{i,k}$ is maximized, we have $l^*_{i,k} = 1/g^*_{i,k}$, $\xi^*_{i,k} = \delta^*_{i,k}P/l^*_{i,k}$, and $\check{R}^*_{i,k} = {R}_{i,k}(\delta^*_{i,k},\mathbf{Q}_k^*)$.

Furthermore, to deal with the non-linear function on the denominator, \textit{i.e.}, $E_{k}^{F}(\mathbf{Q})$, we introduce an auxiliary variable $\hat{E}^F_{k}$ to indicate the upper bound of the UAV propulsion
energy in slot $k$. For the non-linear part of the function, we introduce two auxiliary variables: the term $\omega_k$, where $\omega_k^2\leq {\norm{{\mathbf{v}_k}(\mathbf{Q})}}^2$, and the term $A_{i,k}$, where $A_{i,k}\geq (1/\omega_k)(1+{\norm{a_k(\mathbf{Q})}^2}/{g^2})$. Thus, we have
\begin{align}
\hat{E}^F_{k}& \geq \gamma_1 \norm{{\mathbf{v}_k}(\mathbf{Q})}^3 + \gamma_2 A_k \notag \\
&\geq \gamma_1\norm{{\mathbf{v}_k}(\mathbf{Q})}^3 + \gamma_2  \frac{1}{\omega_k}(1+\frac{\norm{a_k(\mathbf{Q})}^2}{g^2})\geq E_{k}^{F}(\mathbf{Q}).
\end{align}
Similarly, when (\ref{obj:optProb3}) is maximized, \textit{i.e.}, the denominator $E_{k}^{F}(\mathbf{Q})$ is minimized,  $\hat{E}^{F*}_{k} = E_{k}^{F}(\mathbf{Q}^*)$. Therefore, problem (\ref{eq:optProb3}) is equivalent to problem (\ref{eq:optProb2}), and $\eta^* = \check{\eta}^*$
 \section*{Appendix B: Proof of Lemma~\ref{lemma.2}}
Constraint (\ref{eq.c12}) can be transformed into the following equivalent form:
\begin{equation}
(\xi_{i,k}+l_{i,k})^2-(\xi_{i,k}-l_{i,k})^2 \leq 4\delta_{i,k}P,
\end{equation}
which is difference of convex functions \cite{Boyd}. Then, we approximate the second part of the equation by the Taylor expansion:
\begin{align}
& (\xi_{i,k}-l_{i,k})^2  \approx (\xi^t_{i,k}-l^t_{i,k})^2 + \begin{bmatrix}
        2\xi^t_{i,k}-2l^t_{i,k} \\
        2l^t_{i,k}-2\xi^t_{i,k}
    \end{bmatrix}^T \begin{bmatrix}
        \xi_{i,k}-\xi^t_{i,k} \\
        l_{i,k}-l^t_{i,k}
    \end{bmatrix} 
\end{align}
Then, we further reformulate the approximated equation as the constraints shown in (\ref{eq.c19}) with a cone expression. Moreover, constraint (\ref{eq.c16}) is approximated by constraint (\ref{eq.c20}) in a similar way. Constraints (\ref{eq.c15}) and (\ref{eq.c11_no}) are approximated by (\ref{eq.c21}) and (\ref{eq.c22}) respectively by first order Taylor expansion to obtain the lower bound on the squared norm and the subtracted term, respectively.

All the approximated constraints (\ref{eq.c19})-(\ref{eq.c22}) are stricter than their original counterparts, guaranteeing that the solution of the approximated problem is strictly smaller than the original optimum. For example, consider the optimal $\xi_{i,k}$ and $l_{i,k}$ obtained by solving the approximated problem, which is denoted by $\xi^a_{i,k}$ and $l^a_{i,k}$. 
These two variables are bounded by constraint (\ref{eq.c19}) in the approximated problem. Comparing (\ref{eq.c19}) with the original constraint (\ref{eq.c12}) and considering the property of the Taylor expansion, we have $\xi^a_{i,k}l^a_{i,k} + \Delta_{approx} \leq \delta_{i,k}P$, where $\Delta_{approx} \geq 0$. Thus,
\begin{equation}
\frac{B \Delta}{N} \log(1+\xi_{i,k}^a) \leq \frac{B \Delta}{N} \log(1+\frac{\delta_{i,k}P}{l^a_{i,k}})
\end{equation}
Moreover, due to $l^a_{i,k}\geq 1/g_{i,k}$, we have
\begin{equation}
\frac{B \Delta}{N} \log(1+\frac{\delta_{i,k}P}{l^a_{i,k}})\leq R_{i,k}(\delta_{i,k}, \mathbf{Q}_k) .
\end{equation}
Therefore, the approximation on constraint (\ref{eq.c12}) will leads to $\check{R}_{i,k}^* <R_{i,k}(\delta_{i,k}, \mathbf{Q}_k) $. Other approximated constraints can be proven similarly to show that the proposed approximated objective function provides the global lower bound for original objective function (\ref{eq:optProb2}). Moreover,  due to the gradient consistency in the first order estimation, the SCA algorithm will be stopped when a local optimizer is found.

{\footnotesize
\bibliographystyle{IEEEbib}
\bibliography{reference}
}

\end{document}